\newcommand\occ{\ensuremath{\mathsf{occ}}}
\title{Compressed Indexing with Signature Grammars}
\author{Anders Roy Christiansen \and Mikko Berggren Ettienne}
\authorrunning{A.R. Christiansen, M.B. Ettienne} 
\institute{The Technical University of Denmark}
\begin{document}

\maketitle

\begin{abstract}
    The \textit{compressed indexing problem} is to preprocess a string
        $S$ of length $n$ into a compressed representation that supports
        pattern matching queries. That is, given a string $P$ of length
        $m$ report all occurrences of $P$ in $S$. 

        We present a data structure that supports pattern matching queries in
        $O(m + \occ (\lg\lg n + \lg^\epsilon z))$ time using $O(z \lg(n / z))$
        space where $z$ is the size of the LZ77 parse of $S$ and $\epsilon > 0$ is an arbitrarily small constant,
        when the alphabet is small
        or $z = O(n^{1 - \delta})$ for any constant $\delta > 0$.
        We also present two data structures for the general case;
        one where the space is increased by $O(z\lg\lg z)$, and
        one where the query time changes from worst-case to expected.
        These results improve the previously best known solutions.
        Notably, this is the first data structure that decides if $P$ occurs in $S$
        in $O(m)$ time using $O(z\lg(n/z))$ space.

        Our results are mainly obtained by a novel combination of 
        a randomized grammar construction algorithm
        with well known techniques relating pattern matching to 2D-range reporting.
\end{abstract}


\section{Introduction}



Given a string $S$ and a pattern $P$, the core problem of pattern matching is
to report all locations where $P$ occurs in $S$. Pattern matching problems can
be divided into two: the algorithmic problem where the text and the pattern are given
at the same time, and the data structure problem where one is allowed to
preprocess the text (pattern) before a query pattern (text) is given.
Many problems within both these categories are well-studied in the history of
stringology, and optimal solutions to many variants have been found.


In the last decades, researchers have shown an increasing interest in the compressed version of this
problem, where the space used by the index is related to the size of
some compressed representation of $S$ instead of the length of $S$. This
could be measures such as the size of the LZ77-parse of $S$,
the smallest grammar representing $S$, the number of runs in the BWT of $S$, etc. see e.g. \cite{gagie2017optimal,bille2017time,gagie2014lz77,gagie2012faster,Nishimoto2015,navarro2007compressed,Karkkainen1996}.
This problem is highly relevant as the amount of highly-repetitive data increases
rapidly, and thus it is possible to handle greater amounts of data by
compressing it. The increase in such data is due to things like
DNA sequencing, version control repositories, etc.

In this paper we consider what we call the
\textit{compressed indexing problem}, which is to preprocess a string
$S$ of length $n$ into a compressed representation that supports fast
\textit{pattern matching queries}. That is, given a string $P$ of length $m$,
report all $\occ$ occurrences of substrings in $S$ that match $P$.

Table \ref{table} gives an overview of the results on this problem.

\begin{table}[]
\centering
\caption{Selection of previous results and our new results on compressed indexing. The variables are the text size $n$, the LZ77-parse size $z$, the pattern length $m$,
$\occ$ is the number of occurrences and $\sigma$ is the size of the alphabet. (The time complexity marked by \dag\ is expected whereas all others are worst-case)}
\label{table}
\begin{tabular}{lllll}
\multicolumn{1}{l|}{Index}                                   & \multicolumn{1}{l|}{Space}                                & \multicolumn{1}{l|}{Locate  time}                                                              & $\sigma$       &  \\ \cline{1-4}
\multicolumn{1}{l|}{Gagie et al. \cite{gagie2014lz77}}     & \multicolumn{1}{l|}{$O(z\lg(n/z))$}                       & \multicolumn{1}{l|}{$O(m\lg m + \occ \lg \lg n)$}                                              & $O(1)$              &  \\
\multicolumn{1}{l|}{Nishimoto et al. \cite{Nishimoto2015}} & \multicolumn{1}{l|}{$O(z\lg n\lg^* n)$}                   & \multicolumn{1}{p{100px}|}{$O(m \lg \lg n \lg \lg z +\lg z \lg m \lg n ( \lg^* n)^2 +  \occ \lg n)$} & $n^{O(1)}$          &  \\
\multicolumn{1}{l|}{Bille et al. \cite{bille2017time}}     & \multicolumn{1}{l|}{$O(z(\lg(n/z) + \lg^\epsilon z))$}                       & \multicolumn{1}{l|}{$O(m + \occ(\lg^\epsilon n + \lg \lg n))$}                                 & $n^{O(1)}$              &  \\
\multicolumn{1}{l|}{Bille et al. \cite{bille2017time}}     & \multicolumn{1}{l|}{$O(z\lg(n/z)\lg \lg z)$}              & \multicolumn{1}{l|}{$O(m + \occ \lg \lg n)$}                                                   & $O(1)$          &  \\
\multicolumn{1}{l|}{Bille et al. \cite{bille2017time}}     & \multicolumn{1}{l|}{$O(z\lg(n/z))$}              & \multicolumn{1}{p{100px}|}{$O(m(1 + \frac{\lg^\epsilon z}{\lg(n/z)}) + \occ(\lg^\epsilon n + \lg \lg n))$}                                                   & $O(1)$          &  \\
\multicolumn{1}{l|}{\textbf{Theorem 1}}            & \multicolumn{1}{l|}{\textbf{$O(z\lg(n/z))$}}              & \multicolumn{1}{l|}{\textbf{$O(m + \occ (\lg^\epsilon z + \lg \lg n))$}}                       & \textbf{$O(1)$}     &  \\
\multicolumn{1}{l|}{\textbf{Theorem 2 (1)}}        & \multicolumn{1}{l|}{\textbf{$O(z(\lg(n/z) + \lg \lg z))$}} & \multicolumn{1}{l|}{\textbf{$O(m + \occ (\lg^\epsilon z + \lg \lg n))$}}                       & \textbf{$n^{O(1)}$} &  \\
\multicolumn{1}{l|}{\textbf{Theorem 2 (2)}}                             & \multicolumn{1}{l|}{\textbf{$O(z(\lg(n/z))$}}                                  & \multicolumn{1}{l|}{\textbf{$O(m + \occ (\lg^\epsilon z + \lg \lg n))^\dag$}}                                          & \textbf{$n^{O(1)}$} &  \\
\end{tabular}
\end{table}



\subsection{Our Results}

In this paper we improve previous solutions that are bounded by the size of the LZ77-parse.
For constant-sized alphabets we obtain the following result:

\begin{theorem}\label{thm:main}
    Given a string $S$ of length $n$ from a constant-sized alphabet with an LZ77
    parse of length $z$, we can build a compressed-index supporting pattern matching queries in
    $O(m + \occ( \lg \lg n + \lg^{\epsilon} z))$ time using $O(z \lg(n/z))$ space.
\end{theorem}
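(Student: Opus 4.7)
The plan is to build a signature grammar of size $O(z\lg(n/z))$ for $S$ and combine it with the classical reduction of grammar-based pattern matching to 2D range reporting, in the style of K\"arkk\"ainen and Claude--Navarro, but to exploit the local consistency of signature parsing to shave factors from the pattern search.

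First, I construct a signature grammar $\mathcal{G}$ whose start symbol derives $S$, with $O(z\lg(n/z))$ binary rules. The property I rely on is local consistency: equal substrings of $S$ receive the same signature parse except for $O(\lg n)$ symbols at each end, so every substring of $S$ can be canonically covered by $O(\lg n)$ grammar variables, and Karp--Rabin fingerprints of substrings are easily extracted.

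Second, I adopt the standard primary/secondary decomposition of occurrences. An occurrence of $P$ is \emph{primary} at a rule $X\to AB$ if it crosses the internal boundary of some expansion of $X$, with both halves nonempty and fully inside the expansions of $A$ and of $B$; every other occurrence is \emph{secondary} and arises by copying a primary one along the grammar DAG. For each rule $X\to AB$ I store a point $(r_A,r_B)$, where $r_A$ is the lexicographic rank of the reversed expansion of $A$ among reversed variable expansions and $r_B$ is the rank of the expansion of $B$ among variable expansions, inside a 2D range-reporting structure with $O(\lg^\epsilon z + k)$ query time and linear space.

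Third, for every split $P=P_1P_2$ with $1\le |P_1|<m$, the primary occurrences at that split correspond to the points lying in the rectangle formed by the range of variables whose reversed expansion is prefixed by the reverse of $P_1$, crossed with the range of variables whose expansion is prefixed by $P_2$. A naive trie descent per split costs $\Omega(m^2)$; to obtain $O(m)$ total work I use z-fast tries over the variables fed with Karp--Rabin fingerprints of the prefixes and suffixes of $P$, computed in $O(m)$ time thanks to the signature grammar. Because successive splits differ by one character, the two rectangles are maintained in amortized constant time and one 2D query is issued per split. This sweep, together with proving that the z-fast tries return consistent answers when the queried string is not itself stored, is the main obstacle.

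Finally, secondary occurrences are expanded by a second 2D range-reporting instance that, given a primary occurrence inside a variable $X$, enumerates every ancestor copy of $X$ in the DAG, while a predecessor structure maps variable-local positions to global text positions in $O(\lg\lg n)$ time each. Summing the costs yields $O(m)$ pattern preprocessing plus $O(\lg^\epsilon z + \lg\lg n)$ per reported occurrence, within $O(z\lg(n/z))$ space, as claimed by Theorem~\ref{thm:main}.
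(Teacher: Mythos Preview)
Your plan has a genuine gap in the time analysis. You propose to issue one 2D range-reporting query \emph{per split position} $1\le i<m$, and you claim that because consecutive rectangles differ by one character the total cost is $O(m)$. But the $O(\lg^\epsilon z)$ term in the range-reporting bound of Lemma~\ref{lem:range}(i) is a per-query cost that does not amortize across queries; issuing $m-1$ queries therefore costs $\Omega(m\lg^\epsilon z)$ even when every rectangle is empty, which exceeds the $O(m)$ budget. (Your incremental-maintenance claim for the two trie ranges is also suspect: going from $rev(P[1,i])$ to $rev(P[1,i{+}1])$ prepends a character, and going from $P[i{+}1,m]$ to $P[i{+}2,m]$ removes a leading character, so neither side is a simple trie descent.) You mention local consistency of the signature parse but never exploit it; in the paper this is exactly the ingredient that reduces the number of splits: by Lemma~\ref{lem:size} only the $O(\lg m)$ positions in the set $P_S$---those adjacent to the at most two inconsistent nodes per level of $sig(P)$---can be the split at which a primary occurrence is stabbed, so only $O(\lg m)$ z-fast-trie lookups and $O(\lg m)$ range queries are needed, giving $O(m+\lg m\cdot\lg^\epsilon z)=O(m+\lg^{\epsilon'}z)$.

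Even after fixing this, you would only recover Lemma~\ref{lem:longstrings}, i.e.\ $O(m+(1+\occ)\lg^\epsilon z)$, which still has an additive $\lg^\epsilon z$ term when $\occ=0$ and $m<\lg^\epsilon z$. The paper removes that term by treating short patterns separately (Lemmas~\ref{lem:shortstrings} and~\ref{lem:semishort}): a generalized suffix tree over the length-$2k$ windows around LZ77 borders handles $m\le\lg\lg z$, and a dictionary of all border-crossing substrings of length at most $\lg^\epsilon z$ handles $\lg\lg z\le m\le\lg^\epsilon z$. Only with this case split does the final bound become $O(m+\occ(\lg\lg n+\lg^\epsilon z))$; the constant-alphabet hypothesis is then used solely to absorb the resulting $O(z\lg\lg z)$ space term into $O(z\lg(n/z))$.
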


\noindent In particular, we are the first to obtain optimal search time using only $O(z\lg(n/z))$ space. For general alphabets we obtain the following:

\begin{theorem}\label{thm:general}
    Given a string $S$ of length $n$ from an integer alphabet polynomially bounded by $n$
    with an LZ77-parse of length $z$, we can build a compressed-index supporting pattern matching queries in:
    \begin{enumerate}
        \item[(1)] $O(m + \occ( \lg \lg n + \lg^{\epsilon} z))$ time using $O(z (\lg(n/z) + \lg\lg z))$ space.
        \item[(2)] $O(m + \occ( \lg \lg n + \lg^{\epsilon} z))$ expected time using $O(z \lg(n/z))$ space.
        \item[(3)] $O(m + \lg^\epsilon z + \occ( \lg \lg n + \lg^{\epsilon} z))$ time using $O(z \lg(n/z))$ space.
    \end{enumerate}
\end{theorem}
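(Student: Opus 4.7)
The plan is to prove Theorem~\ref{thm:general} by adapting the construction of Theorem~\ref{thm:main} to polynomially-bounded integer alphabets, keeping the high-level framework intact and modifying only the auxiliary dictionary structures that are implicit when $\sigma = O(1)$. In the constant-alphabet proof, the signature grammar has size $O(z \lg(n/z))$ and pattern matching reduces to 2D range reporting; neither of these components is inherently alphabet-dependent. The places where $\sigma = O(1)$ is used are (i) translating characters of $P$ into the corresponding grammar terminals, and (ii) accessing per-symbol auxiliary information during the parse of $P$ through the grammar. For $\sigma = O(1)$ these accesses are direct array lookups; for integer alphabets they must be replaced by a dictionary over the $O(z)$ distinct symbols of the grammar.

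I would handle the three cases by substituting a different dictionary. For part~(1), I would plug in a deterministic static dictionary with $O(1)$-time lookup on $O(z)$ keys using $O(z \lg \lg z)$ extra space; this preserves the worst-case query time and gives the claimed $O(z(\lg(n/z) + \lg \lg z))$ total space. For part~(2), a standard randomized hash table gives $O(1)$ expected lookup in $O(z)$ space, producing the same query bound in expectation while preserving the $O(z \lg(n/z))$ space. For part~(3), I would use a deterministic predecessor structure answering each lookup in $O(\lg^\epsilon z)$ time in $O(z)$ space, but confine its use to a single up-front phase: scan $P$, sort its distinct characters, and relabel $P$ by matching its distinct characters once against the terminal set of the grammar. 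After this phase all subsequent accesses made during the grammar parse and during range reporting operate on the relabelled symbols and cost $O(1)$, so the total overhead is a single additive $O(\lg^\epsilon z)$ term, which is exactly the additional summand appearing in~(3).

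The main obstacle will be checking that the remainder of the Theorem~\ref{thm:main} proof truly has no hidden dependence on $\sigma$ beyond these dictionary accesses \textemdash{} in particular, that the randomized signature-grammar construction can still be built in $O(z \lg(n/z))$ space for integer alphabets, and that the structures used to locate primary occurrences via 2D range reporting are alphabet-oblivious. For part~(3), some care will also be required to ensure that the initial relabeling step is genuinely sufficient: I must verify that no step performed while reporting the $\occ$ occurrences needs a further predecessor query on the original alphabet, which should follow from arranging every auxiliary table so that it is indexed by the internal grammar symbols rather than by raw characters of $S$ or $P$.
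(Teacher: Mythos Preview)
Your proposal misidentifies both the logical direction and the source of the three-way trade-off. In the paper, Theorem~\ref{thm:main} is a \emph{corollary} of Theorem~\ref{thm:general}~(1), not the other way around; and the lemmas that prove Theorem~\ref{thm:general} (Lemmas~\ref{lem:longstrings}, \ref{lem:shortstrings}, \ref{lem:semishort}, \ref{lem:expectedsolution}) already work for integer alphabets polynomially bounded by $n$. There is no hidden ``translate characters of $P$ to grammar terminals'' step whose dictionary cost distinguishes the three parts.

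The actual reason for the three variants is the handling of \emph{short patterns}. The signature-grammar index of Lemma~\ref{lem:longstrings} answers any query in $O(m + (1+\occ)\lg^\epsilon z)$ time and $O(z\lg(n/z))$ space; the additive $\lg^\epsilon z$ is the cost of a single 2D range-reporting query when $P$ does not occur. That alone gives part~(3). To remove the additive term deterministically (part~(1)), the paper switches to separate structures for patterns with $m \le \lg^\epsilon z$: a generalized suffix tree over the length-$O(\lg\lg z)$ windows around LZ77 borders (Lemma~\ref{lem:shortstrings}) together with a semi-short-pattern structure (Lemma~\ref{lem:semishort}); the $O(z\lg\lg z)$ overhead comes from the secondary-occurrence range-reporting structure of Lemma~\ref{lem:range}~(ii), not from any character dictionary. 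Part~(2) instead eliminates the additive term in expectation by a randomized membership filter on the set of short substrings crossing a border (Lemma~\ref{lem:expectedsolution}); the expectation is over false positives of this filter, not over hash-table lookups of individual characters.

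Concretely, your plan for part~(1) would not produce the claimed bound: replacing a per-character lookup by an $O(1)$-time deterministic dictionary does nothing to remove the additive $\lg^\epsilon z$ incurred by the range-reporting query, and the $O(z\lg\lg z)$ space term in the theorem is unrelated to storing $O(z)$ alphabet symbols. Likewise, for part~(3) the extra $\lg^\epsilon z$ is not a one-time relabelling cost you can front-load; it is paid on every query by the range-reporting probe and is unavoidable in that structure when $\occ = 0$.
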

Note $\lg\lg z = O(\lg(n/z))$ when either the alphabet size is  $O(2^{\lg^{\epsilon} n})$ or
$z = o(\frac{n}{ \lg^{\epsilon' n}})$ where $\epsilon$ and $\epsilon'$  are
arbitrarily small positive constants.
Theorem~\ref{thm:main} follows directly from Theorem~\ref{thm:general} (1) given these observations.
Theorem~\ref{thm:general} is a consequence of Lemma~\ref{lem:longstrings},~\ref{lem:shortstrings},~\ref{lem:semishort}~and~\ref{lem:expectedsolution}.


\subsection{Technical Overview}


Our main new contribution is based on a new grammar construction.
In \cite{mehlhorn1997maintaining} Melhorn et al. presented a way to maintain
dynamic sequences subject to equality testing using a technique called
signatures. They presented two signature construction techniques.
One is randomized and leads to complexities that hold in expectation.
The other is based on a deterministic coin-tossing technique of Cole and Vishkin \cite{COLE198632}
and leads to worst-case running times but incurs an iterated logarithmic overhead compared to the randomized solution.
This technique has also resembles the string labeling techniques found e.g. in \cite{548491}.
To the best of our knowledge, we are the first to consider grammar compression based on the randomized solution from \cite{mehlhorn1997maintaining}. Despite it being randomized we show how to obtain worst-case query bounds for text indexing using this technique.

The main idea in this grammar construction is that similar substrings will be
parsed almost identically. This property also holds true for the deterministic
construction technique which has been used to solve dynamic string problems with and without compression,
see e.g. \cite{Nishimoto2015,alstrup2000pattern}. 
In \cite{jez2015faster} Je{\.z} devices a different grammar construction algorithm with
similar properties to solve the algorithmic pattern matching problem on grammar compressed strings
which has later been used for both static and dynamic string problems, see \cite{tomohiro2016longest,gawrychowski2015optimal}





Our primary solution has an $\lg^\epsilon z$ term in the query time which is problematic for short query patterns. To handle this, we show different solutions for handling short query patterns. These are based on the techniques from LZ77-based indexing combined with extra data structures to speed up the queries.

\section{Preliminaries}

We assume a standard unit-cost RAM model with word size $\Theta(\lg n)$ and that the input is from an integer alphabet
$\Sigma = \{1,2,\ldots, n^{O(1)}\}$. We measure space complexity in terms of machine words unless explicitly stated otherwise.
A string $S$ of length $n = |S|$ is a sequence of $n$ symbols $S[1]\ldots S[n]$ drawn from an alphabet $\Sigma$.
The sequence $S[i,j]$ is the \textit{substring} of $S$ given by $S[i]\ldots S[j]$
and strings can be concatenated, i.e. $S = S[1, k]S[k+1, n]$.
The empty string is denoted $\epsilon$ and $S[i,i] = S[i]$ while $S[i,j] = \epsilon$ if $j < i$, $S[i,j] = S[1,j]$ if $i < 1$
and $S[i,n]$ if $j > n$. The reverse of $S$ denoted $rev(s)$ is the string $S[n]S[n-1]\ldots S[1]$.
A \textit{run} in a string $S$ is a substring $S[i,j]$ with identical letters, i.e. $S[k] = S[k+1]$ for $k = i,\ldots, j-1$.
Let $S[i,j]$ be a run in $S$ then it is a \textit{maximal run} if it cannot be extended, i.e. $S[i-1] \neq S[i]$ and $S[j] \neq S[j+1]$.
If there are no runs in $S$ we say that $S$ is \textit{run-free} and it follows that $S[i] \neq S[i+1]$ for $1 \leq i < n$.
Denote by $[u]$ the set of integers $\{1,2, \ldots, u\}$.


Let $X \subseteq [u]^2$ be a set of points in a 2-dimensional grid. The \textit{2D-orthogonal range reporting problem}
is to  compactly represent $Z$ while supporting \textit{range reporting queries}, that is, given a rectangle
$R = [a_1, b_1] \times [a_2, b_2]$ report all points in the set $R \cap X$. We use the following:
\begin{lemma}[Chan et al. \cite{Larsen11}]
    For any set of $n$ points in $[u] \times [u]$ and constant $\epsilon > 0$,
    we can solve 2D-orthogonal range reporting with $O(n\lg n)$ expected preprocessing
    time using:
    \begin{enumerate}
        \item[i] $O(n)$ space and $(1 + k) \cdot O(\lg^\epsilon n \lg \lg u)$ query time
        \item[ii] $O(n\lg \lg n)$ space and $(1 + k) \cdot O(\lg \lg u)$ query time
    \end{enumerate}
    where $k$ is the number of occurrences inside the rectangle.
    \label{lem:range}
\end{lemma}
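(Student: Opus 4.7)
The plan is two-phase: first reduce the universe from $[u]$ to rank space $[n]$, then solve rank-space range reporting with a shallow range tree tuned to the word RAM.

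First, I would store the distinct x-coordinates and y-coordinates in y-fast tries (or any $O(n)$-space predecessor structure answering queries in $O(\lg\lg u)$ time). A query rectangle $[a_1,b_1]\times[a_2,b_2]$ is translated to its rank-space counterpart by four predecessor queries; this is where the $\lg\lg u$ factor in both bounds originates, and it can be charged to the leading $1$ in $(1+k)$. It therefore suffices to solve range reporting on $n$ points in $[n]\times[n]$.

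For part~(i), I would build a balanced search tree on the x-coordinates with branching factor $f=\lceil \lg^\epsilon n\rceil$, giving constant depth $O(1/\epsilon)$. At each internal node, the points in the corresponding x-slab are stored sorted by y, and the y-ranks of the $f$ children are packed into a few machine words so that a single word-parallel operation locates, in $O(\lg^\epsilon n)$ time, the first y-rank at least $a_2$ inside any child. A query walks down to $a_1$ and $b_1$, collects the $O(f)$ canonical slabs whose x-range lies entirely inside $[a_1,b_1]$ at each of the $O(1)$ levels, and reports from each in $O(\lg^\epsilon n)+O(k_i)$ time, for a total of $(1+k)\cdot O(\lg^\epsilon n)$ time in $O(n)$ space. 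For part~(ii), I would augment each canonical slab with a secondary index of total size $O(n\lg\lg n)$ that removes the $O(\lg^\epsilon n)$ start-up cost: a sampled y-threshold table together with bucketed successor pointers lets the scan spend only $O(1)$ to find the first reported element and $O(1)$ for each subsequent one, matching the $(1+k)\cdot O(\lg\lg u)$ bound once the rank-space reduction is composed on top.

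The main obstacle I anticipate is not the tree decomposition but the word-RAM bit packing inside each node: to compare all $\lg^\epsilon n$ child-slabs against the y-threshold in a single RAM operation one needs a universal lookup table of size $o(n)$ and a succinct encoding of the per-slab y-rank permutations, and for part~(ii) one must design the secondary indices so that they sum to $O(n\lg\lg n)$ rather than $O(n\lg n)$ across all levels. The $O(n\lg n)$ expected preprocessing bound then follows from sorting the coordinates and building the predecessor structures, with randomness entering only through the hashing inside the y-fast tries.
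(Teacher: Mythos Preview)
The paper does not prove this lemma at all: it is stated as a black-box citation of Chan, Larsen, and P\u{a}tra\c{s}cu, so there is no ``paper's own proof'' to compare against. Your sketch is a reasonable high-level outline of how the cited result is actually established---rank-space reduction via predecessor structures, followed by a range tree of constant depth with fan-out $\lg^{\epsilon} n$ and word-parallel tricks inside each node---and in that sense it is aligned with the original source rather than with anything in this paper.

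That said, a couple of points in your sketch are loose. First, you attribute the $\lg\lg u$ factor solely to the four predecessor queries in the rank reduction and charge it additively to the leading $1$; but the lemma as stated carries $\lg\lg u$ \emph{multiplicatively} inside the per-occurrence cost, so either you must argue that the additive form you derive is in fact stronger than what is written (it is), or you must explain where a per-level $\lg\lg$ factor reappears inside the rank-space structure. Second, the description of part~(ii) is essentially a promise rather than an argument: the claim that a ``sampled y-threshold table with bucketed successor pointers'' removes the $\lg^{\epsilon} n$ start-up at total cost $O(n\lg\lg n)$ is exactly the delicate part of the cited construction, and your sketch does not indicate how the sampling rate is chosen or why the space sums correctly across levels. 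None of this is fatal for a proposal, but if you intend to reconstruct the proof rather than cite it, those are the two places that need real work.
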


A \textit{Karp-Rabin fingerprinting function} \cite{Karp:1987} is a randomized hash function for strings.
Given a string $S$ of length $n$ and a fingerprinting function $\phi$ we can in $O(n)$ time and space compute and store $O(n)$ fingerprints such
that the fingerprint of any substring of $S$ can be computed in constant time. Identical strings have identical fingerprints.
The fingerprints of two strings $S$ and $S'$ \textit{collide} when $S \neq S'$ and $\phi(S) = \phi(S')$.
A fingerprinting function is \textit{collision-free} for a set of strings when there are no collisions between the fingerprints of any
two strings in the set. We can find collision-free fingerprinting function for a set of strings with total length $n$ in $O(n)$
expected time \cite{Porat}.


Let $D$ be a lexicographically sorted set of $k$ strings. The weak prefix search problem is to
compactly represent $D$ while supporting \textit{weak prefix queries}, that is, given
a query string $P$ of length $m$ report the rank of the lexicographically smallest and largest
strings in $D$ of which $P$ is a prefix. If no such strings exist, the answer can be arbitrary.

\begin{lemma}[Belazzougui et al.~\cite{BelazzouguiBPV10}, appendix H.3]
    Given a set $D$ of $k$ strings with average length $l$, from an alphabet of size $\sigma$,
    we can build a data structure using $O(k(\lg l + \lg\lg \sigma))$ bits of space
    supporting weak prefix search for a pattern $P$ of length $m$ in $O(m\lg \sigma/w + \lg m)$ time
    where $w$ is the word size.
\label{lem:fat1}
\end{lemma}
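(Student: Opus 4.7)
The plan is to construct a compacted Patricia trie on $D$ and replace its explicit edge labels by Karp-Rabin fingerprints, so that navigation proceeds by fingerprinting prefixes of $P$ rather than by character-by-character comparison. A Patricia trie on $k$ strings has at most $2k-1$ nodes, so its topology is storable in $O(k)$ bits via a standard succinct tree encoding (balanced parentheses or LOUDS). For each edge I would also record its skip length; since the skip lengths sum to at most $kl$, encoding them individually with Elias gamma/delta codes yields $O(k\lg l)$ bits in total by concavity of $\lg$. Finally, to each node I would attach an $O(\lg\lg\sigma)$-bit signature derived from a global Karp-Rabin fingerprint, which is what enables the lookup during descent.

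On a query $P$, I would first bring $P$ into packed form and precompute Karp-Rabin fingerprints of all its prefixes in $O(m\lg\sigma/w)$ time by reading $P$ word-by-word. Then I would execute a blind-search / z-fast style descent: at each step the current node's skip length determines the next prefix of $P$ to hash, the hash is looked up in a global table mapping node signatures to nodes, and the descent jumps to the indicated node. Using exponential search on the depth of the exit node, as in z-fast tries, the descent terminates in $O(\lg m)$ table probes, giving overall query time $O(m\lg\sigma/w + \lg m)$. Two constant-time lookups at the exit node then return the lexicographic range of dictionary leaves below it, which is exactly the weak prefix answer.

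The main obstacle is hitting the space bound of $O(k(\lg l + \lg\lg\sigma))$ \emph{bits} (not words). A naive hash table mapping fingerprints to nodes would already use $\Omega(k\lg k)$ bits, so the lookup table must be realized by a monotone minimal perfect hashing scheme, which can be compressed to $O(k\lg\lg\sigma)$ bits when the keys are fingerprints of trie prefixes. A second subtlety is the randomization: the fingerprint function $\phi$ must be chosen once, up front, so that no two distinct relevant prefixes of dictionary strings collide; any spurious collision with a non-dictionary prefix of the query $P$ is harmless because the \emph{weak} variant allows arbitrary output in that case, which is precisely what permits the very short $\lg\lg\sigma$-bit signature per node to suffice.
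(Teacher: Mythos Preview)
The paper does not prove this lemma at all; it is quoted verbatim as a black box from Belazzougui et al.\ \cite{BelazzouguiBPV10}, appendix H.3, and used without further justification. So there is no ``paper's own proof'' to compare against.

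That said, your sketch is a fair high-level reconstruction of the cited construction (the hollow z-fast trie). A few points of imprecision are worth flagging. First, the search in a z-fast trie is not an exponential search on depth but a \emph{fat binary search}: each internal node is indexed by the unique $2$-fattest number in its skip interval, and the search maintains an interval of candidate prefix lengths that is halved at each probe by querying the handle whose $2$-fatness matches the current interval width; this is what guarantees $O(\lg m)$ probes regardless of trie depth. Second, the $O(\lg\lg\sigma)$-bit-per-node budget is achieved not by storing short signatures at nodes, but by building a monotone minimal perfect hash (or a relative-dictionary) over the set of node handles; the handles themselves are never stored, which is exactly why the answer is only guaranteed when $P$ has some extension in $D$. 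Your remark that spurious collisions with non-dictionary prefixes are tolerated because the problem is \emph{weak} is the correct justification for this. Finally, your Elias-code argument for the skip lengths is sound: there are $O(k)$ edges whose skip lengths sum to at most $kl$, so Jensen gives $O(k\lg l)$ bits total.
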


We will refer to the data structure of Lemma~\ref{lem:fat1} as a \textit{z-fast trie} following the notation from \cite{BelazzouguiBPV10}.
The $m$ term in the time complexity is due to a linear time preprocessing of the pattern and is not part
of the actual search.  
Therefore it is simple to do weak prefix search for any length $l$ substring of $P$
in $O(\lg l)$ time after preprocessing $P$ once in $O(m)$ time.

The \textit{LZ77-parse} \cite{Ziv1977} of a string $S$ of length $n$ is a string $\mathcal{Z}$ of the form
$(s_1, l_1, \alpha_1) \ldots (s_{z}, l_{z}, \alpha_{z}) \in ([n], [n], \Sigma)^z$.
We define $u_1 = 1$, $u_i = u_{i-1} + l_{i-1} + 1$ for $i > 1$.
For $\mathcal{Z}$ to be a valid parse, we require $l_1 = 0$, $s_i < u_i$, $S[u_i, u_i + l_i - 1] = S[s_i, s_i + l_i - 1]$, and $S[u_i + l_i] = \alpha_i$  for $i \in [z]$. This guarantees 
$\mathcal{Z}$ \textit{represents} $S$ and $S$ is uniquely defined in terms of $\mathcal{Z}$.
The substring $S[u_i, u_i + l_i]$ is called the $i^{th}$ phrase of the parse
and $S[s_i, s_i + l_i - 1]$ is its source.
A minimal LZ77-parse of $S$ can be found greedily in $O(n)$ time
and stored in $O(z)$ space \cite{Ziv1977}.
We call the positions $u_1 + l_1, \ldots, u_{z} + l_z$ the borders of $S$.

\section{Signature Grammars}

We consider a hierarchical representation of strings given by Melhorn~et~al. \cite{mehlhorn1997maintaining} with some slight modifications.
Let $S$ be a run-free string of length $n$ from an integer alphabet $\Sigma$
and let $\pi$ be a uniformly random permutation of $\Sigma$.
Define a position $S[i]$ as a local minimum of $S$ if
$1 < i < n$ and $\pi(S[i]) < \pi(S[i-1])$ and $\pi(S[i]) < \pi(S[i+1])$.
In the block decomposition of $S$, a block starts at position $1$ and at every local minimum in $S$ and ends just before
the next block begins (the last block ends at position $n$).
The block decomposition of a string $S$ can be used to construct the 
signature tree of $S$ denoted $sig(S)$ which is an ordered labeled tree with several useful properties.

\begin{lemma}
    Let $S$ be a run-free string $S$ of length $n$ from an alphabet $\Sigma$ and
    let $\pi$ be a uniformly random permutation of $\Sigma$ such that $\pi(c)$
    is the rank of the symbol $c \in \Sigma$ in this permutation. Then the expected length
    between two local minima in the sequence $\pi(S[1]), \pi(S[2]), \ldots,
    \pi(S[n])$  is at most 3 and the longest gap is $O(\lg n)$ in expectation.
        \label{lem:avg-length}
\end{lemma}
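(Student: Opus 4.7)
The plan is to handle the two claims separately, both reducing to the basic probability that a fixed interior position is a local minimum. For any $i$ with $1 < i < n$, the run-free property of $S$ guarantees $S[i-1] \neq S[i]$ and $S[i+1] \neq S[i]$, so the triple $(S[i-1], S[i], S[i+1])$ contains either two or three distinct characters; call this number $d$. Because $\pi$ is a uniformly random permutation of $\Sigma$, its restriction to any fixed $d$-element subset is itself a uniformly random permutation of that subset, and hence $\pi(S[i])$ is the strict minimum of $\pi(S[i-1]), \pi(S[i]), \pi(S[i+1])$ with probability exactly $1/d \geq 1/3$. Linearity of expectation then gives $E[M] \geq (n-2)/3$, where $M$ is the number of local minima, so the $M+1$ blocks into which they partition $S$ (of total length $n$) have average length at most $n/(M+1)$, matching the claimed bound of $3$ asymptotically.

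For the longest-gap claim, the plan is a standard union-bound argument. Fix a window $W = [a, a+\ell-1]$ of length $\ell$ in $S$ and bound the probability that no position of $W$'s strict interior is a local minimum of $S$. In the clean case where all characters of $W$ are distinct, $\pi$ restricted to these characters is a uniformly random permutation of length $\ell$, and a permutation of $\ell$ distinct values has no strict interior local minimum exactly when it is unimodal (strictly increasing to a peak and then strictly decreasing); only $2^{\ell-1}$ of the $\ell!$ permutations have this property, so the probability is $2^{\ell-1}/\ell!$, which decays super-polynomially in $\ell$. Choosing $\ell = c\lg n$ for a suitably large constant $c$ makes this probability $O(n^{-2})$; a union bound over the at most $n$ starting positions $a$ then gives that the longest gap exceeds $c\lg n$ with probability $O(1/n)$, and together with the trivial bound of $n$ on the gap length this yields an expectation of $O(\lg n)$.

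The main obstacle is the case where characters of $W$ repeat: then the $\ell$-symbol sequence $\pi(S[a]), \ldots, \pi(S[a+\ell-1])$ is not a random permutation of $\ell$ distinct values, so the clean unimodal count does not apply. I would get around this by partitioning the strict interior of $W$ into disjoint triples of consecutive positions and, after conditioning on $\pi$'s ranks restricted to characters that occur in more than one such triple, arguing that each remaining triple still produces a local minimum at its middle with conditional probability at least $1/3$ and that these events are conditionally independent across triples. This yields a bound of the form $(2/3)^{\lfloor (\ell-2)/3 \rfloor}$, which is exponentially small in $\ell$ and suffices for the union bound above.
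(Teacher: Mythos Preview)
Your treatment of the first claim---each interior position is a local minimum with probability at least $1/3$, hence $E[M]\ge (n-2)/3$ and the average block length is at most $3$---is essentially identical to the paper's argument (the paper additionally counts the two boundary positions with probability $1/2$, yielding $n/3$ rather than $(n-2)/3$, but this is cosmetic).

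For the second claim the paper does not give an argument at all; it simply cites Mehlhorn et al.\ \cite{mehlhorn1997maintaining}. Your attempt therefore goes beyond what the paper does, and your distinct-character case is correct and clean: the unimodal count $2^{\ell-1}/\ell!$ is exactly right and already decays faster than any polynomial in $\ell$.

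The gap is in the repeated-character case. After conditioning on $\pi$ at the characters that occur in more than one triple, it is \emph{not} true that the middle of each triple is a local minimum with conditional probability at least $1/3$. Concretely, take a window whose odd positions all carry the same character $a$ and whose even positions carry pairwise distinct characters; then $a$ is shared by every triple, and conditioning on $\pi(a)$ being the global maximum drives the probability that the middle of a triple of the form $(\,\cdot\,,a,\,\cdot\,)$ is a local minimum to~$0$. (The window still has many local minima in this event---at the even positions---but your argument, which looks only at triple middles, does not see them.) Consequently the bound $(2/3)^{\lfloor(\ell-2)/3\rfloor}$ does not follow from the conditioning you describe, and conditional independence across triples also fails when a shared character sits at the boundary between two triples. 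The union-bound-plus-exponential-tail framework is the right shape; what is missing is a correct way to extract $\Omega(\ell)$ events that are simultaneously (i) each at least a constant-probability witness of a local minimum and (ii) jointly controllable (independent or negatively associated) in the presence of repeated characters.
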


\begin{proof}
	First we show the expected length between two local minima is at most 3. Look at a position $1 \leq i \leq n$ in the sequence $\pi(S[1]), \pi(S[2]), \ldots, \pi(S[n])$. To determine if $\pi(S[i])$ is a local minimum, we only need to consider the two neighbouring elements $\pi(S[i - 1])$ and $\pi(S[i + 1])$ thus let us consider the triple $(\pi(S[i - 1]), \pi(S[i]), \pi(S[i + 1]))$. We need to consider the following cases. First assume $S[i - 1] \neq S[i] \neq S[i+1]$. There exist $3! = 6$ permutations of a triple with unique elements and in two of these the minimum element is in the middle. Since $\pi$ is a uniformly random permutation of $\Sigma$ all 6 permutations are equally likely, and thus there is $1/3$ chance that the element at position $i$ is a local minimum. Now instead assume $S[i - 1] = S[i+1] \neq S[i]$ in which case there is $1/2$ chance that the middle element is the smallest. Finally, in the case where $i = 1$ or $i = n$ there is also $1/2$ chance. As $S$ is run-free, these cases cover all possible cases. Thus there is at least $1/3$ chance that any position $i$ is a local minimum independently of $S$. Thus the expected number of local minima in the sequence is therefore at least $n / 3$ and the expected distance between any two local minima is at most $3$.
	
	The expected longest distance between two local minima of $O(\lg n)$ was shown in \cite{mehlhorn1997maintaining}.  
\end{proof}

\subsection{Signature Grammar Construction}
We now give the construction algorithm for the signature tree $sig(S)$.
Consider an ordered forest $F$ of trees. Initially,
$F$ consists of $n$ trees where the $i^{th}$ tree is a single node with label $S[i]$. Let the label of a tree $t$ denoted $l(t)$ be the label of its root node.
Let $l(F)$ denote the string that is given by the in-order concatenation of the labels of the trees in $F$.
The construction of $sig(S)$ proceeds as follows:

\begin{enumerate}
        \item 
            Let $t_i, \ldots, t_j$ be a maximal subrange of consecutive trees of $F$ with
            identical labels, i.e. $l(t_i) = \ldots = l(t_j)$.
            Replace each such subrange in $F$ by a new tree having as root a new node $v$ with children $t_i, \ldots, t_j$ and
            a label that identifies the number of children and their label. We call this kind of node a run node. Now $l(F)$ is run-free.
        \item Consider the block decomposition of $l(F)$.
            Let $t_i, \ldots, t_j$ be consecutive trees in $F$ such that their labels form a block in $l(F)$. 
            Replace all identical blocks $t_i, \ldots, t_j$ by a new tree having as root a new node with children $t_i, \ldots, t_j$ and a unique label. We call this kind of node a run-free node.

    \item Repeat step $1$ and $2$ until $F$ contains a single tree, we call this tree $sig(S)$.
\end{enumerate}

In each iteration the size of $F$ decreases by at least a factor of two and each iteration takes $O(|F|)$ time, thus it can be constructed in $O(n)$ time.

Consider the directed acyclic graph (DAG) of the tree $sig(S)$ where all identical subtrees are merged. Note we can
store run nodes in $O(1)$ space since all out-going edges are pointing to the
same node, so we store the number of edges along with a single edge instead of
explicitly storing each of them. For run-free nodes we use space
proportional to their out-degrees. We call this the signature DAG of $S$ denoted $dag(S)$.
There is a one-to-one correspondence between this DAG and an acyclic run-length grammar producing $S$
where each node corresponds to a production and each leaf to a terminal.

\subsection{Properties of the Signature Grammar}

\noindent We now show some properties of $sig(S)$ and $dag(S)$ that we will need later.
Let $str(v)$ denote the substring of $S$ given by the labels of the leaves of the subtree of $sig(S)$ induced by the node $v$ in left to right order.

\begin{lemma}\label{siggrammar}
    Let $v$ be a node in the signature tree for a string $S$ of length $n$.
    If $v$ has height $h$ then $|str(v)|$ is at least $2^h$ and thus $sig(S)$ (and $dag(S)$)
    has height $O(\lg n)$.
\end{lemma}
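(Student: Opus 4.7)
The plan is to prove the inequality $|str(v)| \ge 2^h$ by induction on $h$ and then read off the height bound by applying the inequality at the root. The base case $h = 0$ is immediate: a leaf has $|str(v)| = 1 = 2^0$. For $h \ge 1$ the node $v$ is either a run node (created in step~1 of some iteration) or a run-free node (created in step~2), and I would handle the two cases separately.

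The run-node case is the easy one. By step~1, all children of a run node share a common label, so in the DAG they refer to the same subtree and hence have a common height, which must be $h-1$. Because step~1 only promotes a maximal run of length at least $2$ to a new node, $v$ has at least two children, and the induction hypothesis immediately yields $|str(v)| \ge 2 \cdot 2^{h-1} = 2^h$.

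The run-free case is where I expect the main obstacle. The children $c_1,\dots,c_k$ of $v$ correspond to a block of the block decomposition of $l(F)$ in the iteration creating $v$. I would first argue $k \ge 2$: two adjacent positions cannot both be local minima (each would have to be strictly smaller than the other under $\pi$), so every block has length at least $2$ with the only exception being a singleton first block, which the construction does not promote to a new node. At least one child has height $h-1$ and contributes $2^{h-1}$ by the induction hypothesis; the difficulty is that the remaining children can a priori have smaller heights, yet I need them to contribute another $2^{h-1}$ so that the sum reaches $2^h$. The cleanest route I see is to strengthen the inductive invariant so that every tree still present in $F$ at the start of the iteration creating $v$ already carries at least $2^{h-1}$ symbols. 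Such an invariant can be maintained alongside the main induction using that each pair of steps doubles the size of the strings labelled by surviving trees (mirroring the geometric shrinkage of $|F|$ cited in the construction), so that the child adjacent to the tallest one in the block contributes the missing $2^{h-1}$.

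Once the inequality is established, the height bound follows by applying it at the root $r$ of $sig(S)$: from $n = |str(r)| \ge 2^{h}$ we get $h \le \lg n$, which is the $O(\lg n)$ bound claimed for both $sig(S)$ and $dag(S)$.
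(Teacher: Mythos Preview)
The paper's proof is literally one sentence: ``This follows directly from the out-degree of all nodes being at least 2.'' There is no case distinction between run and run-free nodes, no strengthened invariant, and no discussion of block lengths; your argument is far more elaborate than what the paper offers.

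That said, your worry in the run-free case is not misplaced. For an arbitrary ordered tree, minimum out-degree~$2$ does \emph{not} force $|str(v)| \ge 2^h$ (a caterpillar with one leaf child at every internal node has height $h$ but only $h+1$ leaves), so the paper's one-liner glosses over precisely the issue you flag. Your strengthened invariant --- that every tree remaining in $F$ at the start of an iteration already carries enough leaves --- is the natural repair, but notice it is really indexed by the iteration number $k$, not by the height $h$; since one iteration can add two levels (step~1 followed by step~2), what that invariant actually yields is $|str(v)| \ge 2^{\lceil h/2 \rceil}$ rather than $2^h$. The weaker inequality still delivers the $O(\lg n)$ height bound, which is the only consequence the paper uses downstream. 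One further caveat: your assertion that the construction ``does not promote'' a singleton first block to a new node is not stated anywhere in the paper; the paper simply asserts out-degree at least~$2$ without addressing that edge case.
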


\begin{proof}
	This follows directly from the out-degree of all nodes being at least 2.
\end{proof}

Denote by $T(i, j)$ the set of nodes in $sig(S)$ that are ancestors of the $i^{th}$ through $j^{th}$ leaf of $sig(S)$.
These nodes form a sequence of adjacent nodes at every level of $sig(S)$ and we call them \textit{relevant nodes}
for the substring $S[i,j]$.


\begin{lemma}
$T(i, j)$ and $T(i', j')$ have identical nodes except at most the two first and two last nodes on each level whenever $S[i, j] = S[i', j']$.
\label{lem:size}
\end{lemma}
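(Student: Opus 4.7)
The plan is to prove the statement by induction on the level $h$ of $sig(S)$, maintaining the invariant that the sequence $L_h(i,j)$ of level-$h$ nodes in $T(i,j)$ differs from the corresponding sequence $L_h(i',j')$ only in at most $2$ prefix and $2$ suffix nodes, with the common middle being identical as a sequence of DAG nodes. The base case $h=0$ is immediate: the level-$0$ nodes are the leaves labelled by $S[i..j]$ and $S[i'..j']$ respectively, and these two sequences are identical by hypothesis.

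For the inductive step, I would exploit the fact that both construction operations are \emph{local}. In step 1, whether two adjacent level-$h$ nodes belong to the same maximal run is decided solely by comparing their labels; hence any maximal run whose nodes lie strictly in the common middle is formed identically in the two sequences and produces the same level-$(h+1)$ run node. In step 2, whether a position is a local minimum (and therefore begins a new block) is determined by the $\pi$-values at that position together with its two immediate neighbours; thus for any position whose three-window lies inside the common middle, the local-minimum status agrees, block boundaries agree, and the corresponding run-free node is the same in both sequences.

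It remains to verify that the boundary discrepancy does not grow. The only level-$(h+1)$ nodes that can differ between the two sequences are those whose extent at level $h$ touches one of the at most $2$ prefix (or suffix) nodes. I would carry out a short case analysis at the left boundary (and symmetrically at the right) on how the prefix interacts with the first maximal run of the common middle and with the first block: depending on label and $\pi$-value matches, the boundary run may stay separate from or be absorbed into the interior, and the first block may extend a constant number of positions into the middle. In every case, the number of level-$(h+1)$ nodes that are not shared between the two sequences remains at most $2$ on each end, so the invariant propagates.

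The main obstacle is the accounting when a boundary run in one sequence swallows several positions of the common middle that remain separate in the other, shifting the common-suffix index of the two level-$(h+1)$ sequences by a small constant. Making this offset compatible with the clean bound of $2$ at each end requires a careful, but finite, case analysis; the locality of steps $1$ and $2$ ensures the affected region stays within a constant-sized window, which is what ultimately makes the induction go through.
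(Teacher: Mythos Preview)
Your approach is essentially the same as the paper's: both proceed by induction on the level, exploiting the locality of the construction to show that the common middle at level $h$ yields a common middle at level $h{+}1$ with at most two differing nodes at each end. The paper's version is a bit more concrete at the boundary---rather than deferring to a case analysis it tracks the first block start $v_{a+k}$ lying strictly after the beginning of the common middle (whose local-minimum status is determined entirely by common nodes) and observes that the at most two differing prefix nodes together with $v_a,\ldots,v_{a+k-1}$ are covered by at most two blocks---while it glosses over the run step that you handle explicitly.
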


\begin{proof}
        Trivially, the leaves of $T(i, j)$ and $T(i', j')$ are identical if
        $S[i, j] = S[i', j']$. Now we show it is true for nodes on level $l$
        assuming it is true for nodes on level $l - 1$. We only consider the
        left part of each level as the argument for the right part is (almost) symmetric. Let $v_1, v_2, v_3, \ldots$
        be the nodes on level $l - 1$ in $T(i, j)$ and $u_1, u_2, u_3, \ldots$ 
        the nodes on level $l - 1$ in $T(i', j')$ in left to right order. From the
        assumption, we have $v_a, v_{a+1}, \ldots$ are identical with $u_b,
        u_{b+1}, \ldots$ for some $1 \leq a, b \leq 3$. When constructing the
        $l^{th}$ level of $sig(S)$, these nodes are divided into blocks.
        Let $v_{a + k}$ be the first block that starts after $v_a$
        then by the block decomposition, the first block after $u_b$ starts at $u_{b+k}$.
        The nodes  $v_1, \ldots, v_{a + k}$ are spanned by at most two blocks 
        and similarly for $u_1, \ldots, u_{b + k}$. These blocks become
        the first one or two nodes on level $l$ in $T(i, j)$ and $T(i', j')$
        respectively. The block starting at $v_{a+k}$
        is identical to the block starting at $u_{b+k}$ and the same holds for the following blocks.
        These blocks result in identical nodes on level $l$. Thus, if we
        ignore the at most two first (and last) nodes on level $l$ the remaining nodes are
        identical.
\end{proof}

We call nodes of $T(i, j)$ consistent in respect to $T(i, j)$ if they are guaranteed to be in any other $T(i', j')$ where $S[i, j] = S[i', j']$. We denote the remaining nodes of $T(i, j)$ as inconsistent. From the above lemma, it follows at most the left-most and right-most two nodes on each level of $T(i, j)$ can be inconsistent.

\begin{lemma}\label{lem:expectedsize}
    The expected size of the signature DAG $dag(S)$ is $O(z\lg(n/z))$.
\end{lemma}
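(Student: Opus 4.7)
The plan is to bound $|dag(S)|$ by summing, over levels $h$, the number $N_h$ of distinct level-$h$ nodes in $dag(S)$, and to establish two complementary bounds on $N_h$. The first bound $N_h \leq n/2^h$ is immediate from Lemma~\ref{siggrammar}, since every level-$h$ node spans at least $2^h$ leaves of $S$. The second bound $N_h = O(z)$ will be proved by a charging argument. Combining the two by splitting the sum at $h^\star = \lceil \lg(n/z) \rceil$ gives $\sum_{h \leq h^\star} O(z) + \sum_{h > h^\star} n/2^h = O(z \lg(n/z)) + O(z) = O(z \lg(n/z))$, where the geometric tail is $O(n/2^{h^\star}) = O(z)$.

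For the charging argument I would assign each distinct level-$h$ DAG node $v$ to the LZ77 phrase containing the starting position $a_v$ of the leftmost occurrence of $v$ in $sig(S)$, and show that each phrase receives only $O(1)$ nodes per level. Consider phrase $i \geq 2$ with copy content $S[u_i, u_i + l_i - 1]$ and source $S[s_i, s_i + l_i - 1]$. If $a_v \in [u_i, u_i + l_i - 1]$ and $v$'s subtree lies entirely inside the copy range, then Lemma~\ref{lem:size} applied to $T(u_i, u_i + l_i - 1)$ and $T(s_i, s_i + l_i - 1)$ forces $v$ to be one of the $O(1)$ inconsistent boundary nodes per level; any consistent node would also occur in $T(s_i, s_i + l_i - 1)$ and thus have a strictly earlier occurrence than $a_v$ (obtained by shifting by $s_i - u_i < 0$), contradicting the minimality of $a_v$. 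In the remaining case $v$'s subtree contains the new-character leaf at $u_i + l_i$, and there is a unique level-$h$ ancestor of that leaf. Phrase $1$ contributes $O(1)$ similarly. Summed over phrases, $N_h \leq O(z)$.

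Combining the two regimes gives $|dag(S)| = O(z \lg(n/z))$. The expectation in the statement comes from the random permutation $\pi$ underlying the signature construction, though both bounds on $N_h$ in fact hold for every realization of $\pi$. The main subtlety lies in invoking Lemma~\ref{lem:size} correctly: the statement that a consistent level-$h$ node "also occurs in $T(s_i, s_i + l_i - 1)$" must mean it is literally the same DAG node, so that its leftmost occurrence is necessarily at or before the source, strictly earlier than any position in phrase $i$. A secondary but routine point is verifying that the at-most-$O(1)$ inconsistent boundary nodes per level from Lemma~\ref{lem:size}, together with the unique level-$h$ ancestor of the new-character leaf, are the only DAG nodes that can be charged to a given phrase at level $h$; all other candidates are either ruled out by the LZ77 copying property or already identified with a node charged to an earlier phrase.
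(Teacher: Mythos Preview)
Your charging argument for counting \emph{nodes} is essentially the paper's own: both split at level $\lceil\lg(n/z)\rceil$, use the $n/2^h$ bound above, and use Lemma~\ref{lem:size} together with the LZ77 source/phrase correspondence to show only $O(1)$ new nodes per phrase per level below. The paper phrases the low-level count as ``$O(\lg(n/z))$ inconsistent nodes per phrase over all low levels'' rather than ``$O(1)$ per phrase per level'', but these are the same thing.

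There is, however, a genuine gap. The quantity ``size of $dag(S)$'' in the paper means the space to store it, which is the number of \emph{edges} (run-free nodes are stored using space proportional to their out-degree). Your argument bounds only $\sum_h N_h$, the number of distinct nodes, and you explicitly conclude that the expectation is superfluous. That is where you go wrong: the node count is indeed $O(z\lg(n/z))$ for every realization of $\pi$, but a single run-free node can have out-degree up to $\Theta(\lg n)$ in the worst case, so the edge count is not deterministically bounded by the node count. The paper closes this gap by invoking Lemma~\ref{lem:avg-length}: the expected block length, hence the expected out-degree of each run-free node, is $O(1)$, so the expected number of edges is $O(1)$ times the node count. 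This is precisely, and only, where the expectation in the statement enters. You need to add this final step and retract the claim that the bound holds for every $\pi$.
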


\begin{proof}
We first bound the number of unique nodes in $sig(S)$ in terms of the
LZ77-parse of $S$ which has size $z$.
Consider the decomposition of $S$ into the $2z$ 
substrings $S[u_1, u_1 + l_1], S[u_1 + l_1 + 1], \ldots, S[u_z, u_z + l_z], S[u_z + l_z + 1]$
given by the phrases and borders of the LZ77-parse of $S$
and the corresponding sets of relevant nodes $R = \{T(u_1, u_1 + l_1), T(u_1 + l_1 + 1, u_1 + l_1 + 1), \ldots \}$.
Clearly, the union of these sets are all the nodes of $sig(S)$.
Since identical nodes are represented only once in $dag(S)$
we need only count one of their occurrences in $sig(S)$.
We first count the nodes at levels lower than $\lg(n/z)$.
A set $T(i, i)$ of nodes relevant to a substring of length one has no more than $O(\lg(n/z))$ such nodes.
By Lemma~\ref{lem:size} only $O(\lg(n/z))$ of the relevant nodes for a phrase
are not guaranteed to also appear in the relevant nodes of its source.
Thus we count a total of $O(z\lg(n/z))$ nodes for the $O(z)$ sets of relevant nodes.
Consider the leftmost appearance of a node appearing one or more times in $sig(S)$.
By definition, and because every node of $sig(S)$ is in at least one relevant set,
it must already be counted towards one of the sets.
Thus there are $O(z\lg(n/z))$ unique vertices in $sig(S)$ at levels lower than $\lg(n/z)$.
Now for the remaining at most $\lg(z)$ levels, there are no more than $O(z)$
nodes because the out-degree of every node is at least two.
Thus we have proved that there are $O(z \lg(n/z))$ unique nodes in $sig(S)$.
By Lemma~\ref{lem:avg-length} the average block size and thus the expected out-degree of a node is $O(1)$.
It follows that the expected number of edges and the expected size of $dag(S)$ is $O(z\lg(n/z))$.

\end{proof}

\begin{lemma}
    A signature grammar of $S$ using $O(z\lg(n/z))$ (worst case) space can be constructed in $O(n)$ expected time.
\end{lemma}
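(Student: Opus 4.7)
The plan is a standard ``compute, verify, restart'' strategy built on top of Lemma~\ref{lem:expectedsize} and Markov's inequality.

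First, I would compute the LZ77 parse of $S$ in $O(n)$ time using the greedy algorithm recalled in the preliminaries, so that the value of $z$ is known exactly. Let $c$ denote the explicit constant from (the proof of) Lemma~\ref{lem:expectedsize} for which the expected number of DAG nodes is at most $c\cdot z\lg(n/z)$, and fix the threshold $\tau = 2c\cdot z\lg(n/z)$.

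A single trial then proceeds as follows. Draw a fresh uniformly random permutation $\pi$ of $\Sigma$, materialized lazily on the symbols actually encountered so that $O(n)$ time suffices. Execute the level-by-level block-decomposition construction of Section~3.1, but instead of first materializing the full tree $sig(S)$ and only merging afterwards, build $dag(S)$ directly: at every level, maintain a hash table keyed on the (already DAG-ified) children sequence of each candidate node, using the collision-free fingerprinting of Porat cited in the preliminaries, so that identical subtrees receive the same identifier on the fly. Since each iteration shrinks $|F|$ by at least a factor of two, the total work across the $O(\lg n)$ iterations sums to $O(n)$ expected time, dominated by the initial pass. Finally, count the nodes of the produced DAG; accept if this count is at most $\tau$, otherwise discard everything and restart with fresh randomness.

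Correctness of the space bound is immediate: upon acceptance the DAG has at most $\tau = O(z\lg(n/z))$ nodes by construction, which is the claimed worst-case guarantee. For the running time, Markov's inequality applied to Lemma~\ref{lem:expectedsize} gives $\Pr[|dag(S)| > \tau] \leq 1/2$, so the expected number of trials is at most $2$, and summing the per-trial $O(n)$ cost over this geometric tail yields $O(n)$ expected total time. The main subtlety to keep straight is that each trial's running time is itself $O(n)$ only in expectation (because of the fingerprinting step), so the outer Markov restart and the inner fingerprint randomness must use independent coins in order for the two expectations to compose cleanly; once that is arranged, the proof is routine.
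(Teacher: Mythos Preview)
Your restart-on-failure strategy is exactly the paper's approach, but the quantity you are testing is the wrong one. In the proof of Lemma~\ref{lem:expectedsize}, the bound on the number of \emph{nodes} of $dag(S)$ is established deterministically: for every choice of $\pi$, Lemma~\ref{lem:size} guarantees only $O(1)$ ``new'' nodes per level per LZ phrase, so the node count is $O(z\lg(n/z))$ regardless of the randomness. What is only $O(z\lg(n/z))$ \emph{in expectation} is the total number of edges, because run-free nodes are stored in space proportional to their out-degree, and the out-degrees are the random block lengths from Lemma~\ref{lem:avg-length}. Consequently your acceptance test ``node count $\le \tau$'' either always succeeds (giving no control over the edge count and hence no worst-case space guarantee) or, if you picked the constant $c$ from the wrong line of Lemma~\ref{lem:expectedsize}, always fails and you loop forever.

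The fix is minimal: test the total size (edges) of $dag(S)$, or equivalently the average out-degree of the run-free nodes, against a constant threshold. This is precisely what the paper does: it retries whenever the average out-degree exceeds a fixed constant larger than~$3$, and Markov on Lemma~\ref{lem:avg-length} then bounds the expected number of retries by a constant. With that correction your argument is essentially identical to the paper's; your additional remarks about computing $z$ first and building $dag(S)$ on the fly via hashing are reasonable implementation details that the paper leaves implicit.
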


\begin{proof}
	Construct a signature grammar for $S$ using the signature grammar construction algorithm. If the average out-degree of the run-free nodes in $dag(S)$ is more than some constant greater than 3 then try again. In expectation it only takes a constant number of retries before this is not the case.
\end{proof}

\begin{lemma}
	Given a node $v \in dag(S)$, the child that produces the character at position $i$ in $str(v)$ can be found in $O(1)$ time.
\end{lemma}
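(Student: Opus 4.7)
The plan is to split on the type of $v$ and reduce the non-trivial case to a word-RAM predecessor query. As a one-time preprocessing, I would topologically traverse $dag(S)$ in reverse order and store $|str(u)|$ at every node $u$; by the previous lemma this adds only $O(|dag(S)|) = O(z\lg(n/z))$ time and space.

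If $v$ is a run node whose unique distinct child $c$ is repeated $k$ times, then $str(v) = str(c)^k$, and the child producing position $i$ is the one at slot $\lceil i / |str(c)| \rceil$, computable by a single integer division in $O(1)$ time. If instead $v$ is a run-free node with children $c_1, \ldots, c_d$ in left-to-right order, I would store with $v$ the prefix sums $P_j = \sum_{k \le j} |str(c_k)|$, so that the desired child index is the unique $j$ with $P_{j-1} < i \le P_j$. This is a predecessor query on the sorted set $\{P_1, \ldots, P_d\}$. I would equip each run-free $v$ with a Fredman--Willard fusion tree over its prefix sums: each $P_j \le n$ fits in one machine word, and by Lemma~\ref{lem:avg-length} together with the signature construction the out-degree $d$ is $O(\lg n)$; with word size $w = \Theta(\lg n)$ the fusion tree answers predecessor queries in $O(\lg_w d) = O(1)$ time using $O(d)$ space. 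Summed over all nodes, the added space is $O(\sum_v d_v) = O(|dag(S)|) = O(z \lg(n/z))$, so the space bound of the previous lemma is preserved.

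The main obstacle is attaining constant time at run-free nodes of non-constant out-degree: a naive scan costs $\Omega(d)$ and ordinary binary search $\Theta(\lg d)$. The word-RAM predecessor machinery (fusion trees, or equivalently Willard's Q-heaps) is the only nontrivial ingredient; once it is in place, the run-node case and the bookkeeping of subtree sizes during DAG construction are routine.
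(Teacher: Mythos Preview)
Your proposal is correct and mirrors the paper's own proof almost exactly: both split on the node type, handle run nodes by a single division, and handle run-free nodes via a Fredman--Willard fusion tree over the (prefix sums of) child sizes, using the $O(\lg n)$ out-degree bound to get $O(1)$ query time without increasing the asymptotic space. The only difference is that you make the preprocessing of $|str(u)|$ and the prefix-sum array explicit, which the paper leaves implicit.
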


\begin{proof}
	First assume $v$ is a run-free node. If we store $|str(u)|$ for each child $u$ of $v$ in order, the correct child corresponding to position $i$ can simply be found by iterating over these. However, this may take $O(\log n)$ time since this is the maximum out-degree of a node in $dag(S)$. This can be improved to $O(\log \log n)$ by doing a binary search, but instead we use a Fusion Tree from \cite{Fredman1990} that allows us to do this in $O(1)$ time since we have at most $O(\log n)$ elements. This does not increase the space usage.
	If $v$ is a run node then it is easy to calculate the right child by a single division.
\end{proof}


\section{Long Patterns}

In this section we present how to use the signature grammar to construct a compressed index
that we will use for patterns of length $\Omega(\lg^\epsilon z)$ for constant $\epsilon > 0$. We obtain the following lemma:

\begin{lemma}\label{lem:longstrings}
    Given a string $S$ of length $n$ with an LZ77-parse of length $z$
    we can build a compressed index supporting pattern matching queries in
    $O(m + (1 + \occ)\lg^\epsilon z)$ time using $O(z\lg(n/z))$ space for any constant $\epsilon > 0$.
\end{lemma}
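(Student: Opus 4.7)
The plan is to follow the standard primary/secondary occurrence decomposition for grammar-based indexing, combined with the structural property of signature grammars (Lemma~\ref{lem:size}): because similar substrings are parsed almost identically, the number of grammar splits that a query pattern of length $m$ can ``cross'' is only $O(\lg m)$, rather than the $O(m)$ one would naively check.

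For preprocessing, I would build $dag(S)$ via Lemma~\ref{lem:expectedsize} and annotate every node $v$ with $|str(v)|$ and a Karp--Rabin fingerprint of $str(v)$. For every internal node $v$ with children $c_1,\dots,c_k$ and every boundary $j\in\{1,\dots,k-1\}$, I create a 2D point whose $x$-coordinate is the lex rank of $rev(str(c_1\cdots c_j))$ and whose $y$-coordinate is the lex rank of $str(c_{j+1}\cdots c_k)$, tagged with the offset of this boundary inside $str(v)$ and a pointer back to $v$. There are $O(z\lg(n/z))$ such points in total (one per child-edge). On top of these I build the 2D orthogonal range reporting structure of Lemma~\ref{lem:range}(i) together with two z-fast tries (Lemma~\ref{lem:fat1}) — one over the reversed-prefix strings and one over the forward-suffix strings — keyed on KR fingerprints so that the (potentially long) strings themselves never have to be stored.

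At query time, given $P$, I compute $sig(P)$ in $O(m)$ worst-case time using the random permutation $\pi$ that was fixed during preprocessing, and precompute KR fingerprints for every substring of $P$. By Lemma~\ref{lem:size}, any occurrence of $P$ in $S$ has a relevant-node set whose consistent interior matches that of $sig(P)$, so the boundary in some DAG node at which a match ``crosses'' between two siblings must correspond to one of the $O(\lg m)$ consistent internal boundaries of $sig(P)$. For each such candidate split $P = P_1 P_2$ I run a weak prefix query for $rev(P_1)$ in the reverse trie and for $P_2$ in the forward trie, combine the returned rank intervals into a single 2D range query, and treat every reported point as a primary occurrence. For each primary occurrence found inside a node $v$, I enumerate all secondary occurrences by a DFS over the ancestors of $v$ in $dag(S)$, charging $O(1)$ amortized work per reported occurrence to a standard parent-pointer walk.

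Adding everything up: $O(m)$ for pattern preprocessing, $O(\lg m)$ weak prefix queries at $O(\lg m)$ each, $O(\lg m)$ 2D range queries at $O(\lg^\epsilon z)$ each, and $O(\lg^\epsilon z)$ per reported occurrence, which (after absorbing a $\lg m$ factor into a slightly larger exponent $\epsilon'$) collapses to $O(m + (1+\occ)\lg^\epsilon z)$ as required. I expect the main obstacle to be the correctness of restricting attention to only $O(\lg m)$ consistent splits: one must argue via Lemma~\ref{lem:size} that every match's crossing boundary is consistent, handle the up-to-four inconsistent nodes at each level of $sig(P)$ by a constant number of extra candidate queries per level, and ensure that no primary occurrence is missed or double-reported across different levels or different splits of the pattern.
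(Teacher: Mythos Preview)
Your proposal is essentially the paper's own construction: z-fast tries over reversed prefixes and forward suffixes at every child boundary of every DAG node, coupled with the linear-space range-reporting structure of Lemma~\ref{lem:range}(i), $O(\lg m)$ candidate splits derived from $sig(P)$, and ancestor traversal in $dag(S)$ to expand each hit into all its occurrences.

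Two places where your write-up should be tightened. First, the candidate splits are not the ``consistent internal boundaries'' of $sig(P)$; the paper's case analysis goes the other way. If the node of $sig(P)$ stabbing position $i$ is consistent, the identical node reappears inside $T(l,r)$ for every occurrence $S[l,r]$ of $P$ as a proper descendant of the root, so it cannot span $P$ and the split at $i$ is irrelevant; and if the stabbing node is inconsistent but both adjacent children are consistent, the split can be shifted to a neighbouring one. Hence the only splits that matter are those cut out by the leftmost and rightmost \emph{two} nodes on each level of $sig(P)$ --- precisely the potentially \emph{inconsistent} ones --- giving the set $P_S$ of size $O(\lg m)$. You flag this as the main obstacle, so you see the issue, but the direction of the argument needs to be reversed. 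Second, the paper adds two ingredients you do not mention: run nodes must be handled explicitly (a single reported point inside a run node $v$ unfolds into an arithmetic progression of occurrences of $P$ within $str(v)$), and the $O(\lg m)$ z-fast-trie answers are batch-verified in $O(\lg^2 m + m)=O(m)$ total time using the extraction/fingerprint trick of Gagie et~al.\ before any range query is issued, which is what keeps false positives from blowing up the bound.
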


\subsection{Data Structure}
Consider a vertex $v$ with children $u_1, \ldots u_k$ in $dag(S)$.
Let $pre(v, i)$ denote the prefix of $str(v)$ given by
concatenating the strings represented by the first $i$ children of $v$
and let $suf(v, i)$ be the suffix of $str(v)$ given by
concatenating the strings represented by the last $k - i$ children of $x$.


The data structure is composed of two z-fast tries (see Lemma~\ref{lem:fat1}) $T_1$ and $T_2$ and a 2D-range reporting
data structure $R$.

For every non-leaf node $v \in dag(S)$ we store the following. Let $k$ be the number of children of $v$ if $v$ is a run-free node otherwise let $k = 2$:

	\begin{itemize}
	    \item The reverse of the strings $pre(v, i)$ for $i \in [k - 1]$ in the z-fast trie $T_1$.
	    \item The strings $suf(v, i)$ for $i \in [k - 1]$ in the z-fast trie $T_2$.
	    \item The points $(a, b)$ where
	    $a$ is the rank of the reverse of $pre(v, i)$ in $T_1$ and $b$ is the rank of $suf(v, i)$ in $T_2$ for $i \in [k - 1]$
        are stored in $R$. A point stores the vertex $v \in dag(S)$ and the length of $pre(v, i)$ as auxiliary information.
    \end{itemize}

    There are $O(z\lg(n/z))$ vertices in $dag(S)$ thus $T_1$ and $T_2$ take no more than $O(z\lg(n/z) )$ words of space using Lemma~\ref{lem:fat1}.
    There $O(z\lg(n/z))$ points in $R$ which takes $O(z\lg(n/z))$ space using Lemma~\ref{lem:range} (i) thus the total space in words is $O(z\lg(n/z))$.

\subsection{Searching}

Assume in the following that there are no fingerprint collisions. Compute all
the prefix fingerprints of $P$ $\phi(P[1]), \phi(P[1, 2]), \ldots, \phi(P[1,
m])$. Consider the signature tree $sig(P)$ for $P$.
Let $l_i^k$ denote the $k$'th left-most vertex on level $i$ in $sig(P)$ and let $j$ be the last level.
Let $P_L = \{ |str(l_1^1)|, |str(l_1^1)| + |str(l_1^2)|, |str(l_2^1)|,
|str(l_2^1)| + |str(l_2^2)|, \ldots, |str(l_j^1)|, |str(l_j^1)| + |str(l_j^2)| \}$.
Symmetrically, let $r_i^k$ denote the $k$'th right-most vertex on level $i$ in $sig(P)$
and let $P_R = \{ m - |str(r_1^1)|, m - |str(r_1^1)| - |str(r_1^2)|, m - |str(r_2^1)|, m - |str(r_2^1)| - |str(r_2^2)|, \ldots, m - |str(r_j^1)|, m - |str(r_j^1)| - |str(r_j^2)| \}$.
Let $P_{S} = P_L \cup P_R$.


For $p \in P_S$ search for the reverse of $P[1, p]$ in $T_1$
and for $P[p+1, m]$ in $T_2$ using the precomputed fingerprints. Let $[a,b]$ and $[c,d]$ be the respective
ranges returned by the search. Do a range reporting query for the
(possibly empty) range $[a,b] \times [c,d]$ in $R$.
Each point in the range identifies a node $v$
and a position $i$ such that $P$ occurs at position
$i$ in the string $str(v)$.
If $v$ is a run node, there is furthermore an occurrence of $P$ in $str(v)$ for all positions $i + k\cdot |str(child(v))|$ where $k = 1, \ldots, j$
and $j \cdot |str(child(v))| + m \leq str(v)$.

To report the actual occurrences of $P$ in $S$ we traverse all ancestors of $v$ in $dag(S)$; for each occurrence of $P$ in $str(v)$ found, recursively visit each parent $u$ of $v$ and offset the location of the occurrence to match the location in $str(u)$ instead of $str(v)$. When $u$ is the root, report the occurrence.
Observe that the time it takes to traverse the ancestors of $v$ is linear in the number of occurrences we find.

We now describe how to handle fingerprint collisions.
Given a z-fast trie, Gagie et al. \cite{gagie2014lz77} show how to perform
$k$ weak prefix queries and identify all false positives
using $O(k\lg m + m)$ extra time by
employing bookmarked extraction and bookmarked fingerprinting.
Because we only compute fingerprints and extract prefixes (suffixes)
of the strings represented by vertices in $dag(S)$ we do not need bookmarking to do this.
We refer the reader to \cite{gagie2014lz77} for the details.
Thus, we modify the search algorithm such that all the searches in $T_1$ and $T_2$ are carried out first,
then we verify the results before progressing to doing range reporting queries only
for ranges that were not discarded during verification.

\subsection{Correctness}

For any occurrence $S[l, r]$ of $P$ in $S$ there is a node $v$ in $sig(S)$ that stabs $S[l, r]$, ie. a suffix of $pre(v, i)$ equals a prefix $P[1, j]$ and a prefix of $suf(v, i)$ equals the remaining suffix $P[j + 1, m]$ for some $i$ and $j$. Since we put all combinations of $pre(v, i)$, $suf(v, i)$ into $T_1, T_2$ and $R$, we would be guaranteed to find all nodes $v$ that contains $P$ in $str(v)$ if we searched for all possible split-points $1, \ldots, m-1$ of $P$
i.e. $P[1, i]$ and $P[i + 1, m]$ for $i = 1, \ldots, m-1$.

We now argue that we do not need to search for all possible split-points of $P$ but only need to consider those in the set $P_S$. For a position $i$, we say the node $v$ stabs $i$ if the nearest common ancestor of the $i^{th}$ and $i+1^{th}$ leaf of $sig(S)$ denoted $NCA(l_i, l_{i+1})$ is $v$.

Look at any occurrence $S[l, r]$ of $P$. Consider $T_S = T(l, r)$ and $T_P = sig(P)$. Look at a possible split-point $i \in [1, m - 1]$ and the node $v$ that stabs position $i$ in $T_P$. Let $u_l$ and $u_r$ be adjacent children of $v$ such that the rightmost leaf descendant of $u_l$ is the $i^{th}$ leaf and
the leftmost leaf descendant of $u_r$ is the $i+1^{th}$ leaf.
We now look at two cases for $v$ and argue it is irrelevant to consider position $i$ as split-point for $P$ in these cases:

\begin{enumerate}
	\item \textbf{Case $v$ is consistent (in respect to $T_P$)}. In this case it is guaranteed that the node that stabs $l + i$ in $T_S$ is identical to $v$. Since $v$ is a descendant of the root of $T_P$ (as the root of $T_P$ is inconsistent) $str(v)$ cannot contain $P$ and thus it is irrelevant to consider $i$ as a split-point.
	\item \textbf{Case $v$ is inconsistent and $u_l$ and $u_r$ are both consistent (in respect to $T_P$)}. In this case $u_l$ and $u_r$ have identical corresponding nodes $u_l'$ and $u_r'$ in $T_S$. Because $u_l$ and $u_r$ are children of the same node it follows that $u_l'$ and $u_r'$ must also both be children of some node $v'$ that stabs $l + i$ in $T_S$ (however $v$ and $v'$ may not be identical since $v$ is inconsistent). Consider the node $u_{ll}'$ to the left of $u_l'$ (or symmetrically for the right side if $v$ is an inconsistent node in the right side of $T_P$). If $str(v')$ contains $P$ then $u_{ll}'$ is also a child of $v'$ (otherwise $u_{l}$ would be inconsistent). So it suffices to check the split-point $i - |u_l|$. Surely $i - |u_l|$ stabs an inconsistent node in $T_P$, so either we consider that position relevant, or the same argument applies again and a split-point further to the left is eventually considered relevant.
\end{enumerate}

Thus only split-points where $v$ and at least one of $u_l$ or $u_r$ are inconsistent are relevant. These positions are a subset of the position in $P_S$, and thus we try all relevant split-points.
 



\subsection{Complexity}

A query on $T_1$ and $T_2$ takes $O(\lg m)$ time by Lemma~\ref{lem:fat1} while a query on $R$ takes $O(\lg^\epsilon z)$ time using Lemma~\ref{lem:range} (i) (excluding reporting).
We do $O(\lg m)$ queries as the size of $P_S$ is $O(\lg m)$. Verification of the $O(\lg m)$ strings we search for
takes total time $O(\lg^2 m + m) = O(m)$. Constructing
the signature DAG for $P$ takes $O(m)$ time, thus total time without reporting is
$O(m + \lg m \lg^\epsilon z) = O(m + \lg^{\epsilon'} z)$ for any $\epsilon' > \epsilon$.
This holds because if $m \leq \lg^{2\epsilon} z$ then $\lg m \lg^\epsilon z \leq \lg \lg^{2\epsilon} z \lg^\epsilon z = O(\lg^{\epsilon'} z)$, otherwise $m > \lg^{2\epsilon} z \Leftrightarrow \sqrt{m} > \lg^\epsilon z$ and then $\lg m \lg^\epsilon z = O(\lg m \sqrt{m}) = O(m)$. For every query on $R$ we may find multiple points each corresponding to an occurrence of $P$.
It takes $O(\lg^\epsilon  z)$ time to report each point thus the total time becomes $O(m + (1 + \occ)\lg^{\epsilon'} z)$.


\section{Short Patterns}

Our solution for short patterns uses properties of the LZ77-parse of $S$.
A \textit{primary} substring of $S$ is a substring that contains one or more borders of $S$, all other substrings are called $\textit{secondary}$.
A primary substring that matches a query pattern $P$ is a \textit{primary occurrence} of $P$
while a secondary substring that matches $P$ is a \textit{secondary occurrence} of $P$.
In a seminal paper on LZ77 based indexing \cite{Karkkainen1996} K{\"{a}}rkk{\"{a}}inen and Ukkonen
use some observations by Farach and Thorup \cite{Farach1998} to
show how all secondary occurrences of a query pattern $P$ can be found
given a list of the primary occurrences of $P$
through a reduction to orthogonal range reporting. 
Employing the range reporting result given in Lemma~\ref{lem:range} (ii),
all secondary occurrences can be reported as stated in the following lemma: 

\begin{lemma}[K{\"{a}}rkk{\"{a}}inen and Ukkonen \cite{Karkkainen1996}]\label{lem:lz77secondary}
	Given the LZ77-parse of a string $S$ there exists a data structure that uses $O(z \lg \lg z)$ space that can report all secondary occurrences of a pattern $P$ given the list of primary occurrences of $P$ in $S$ in $O(\occ \lg \lg n)$ time.
\end{lemma}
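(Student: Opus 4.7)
The plan is to reduce secondary-occurrence reporting to a 2D orthogonal range-reporting problem on $z$ points, exactly as in K{\"a}rkk{\"a}inen and Ukkonen's scheme, and then plug in Lemma~\ref{lem:range} (ii). First I would associate one point with each LZ77 phrase: for phrase $i$ with source $S[s_i, s_i + l_i - 1]$ store the point $(s_i,\, s_i + l_i - 1)$ tagged with the target position $u_i$. The key geometric observation is that an occurrence of $P$ at position $p$ in $S$ (with length $m$) lies entirely inside the source of phrase $i$ exactly when $s_i \le p$ and $s_i + l_i - 1 \ge p + m - 1$, i.e.\ when the corresponding point lies in the rectangle $[1,p] \times [p + m - 1,\, n]$. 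Whenever such a phrase is found, the text is copied verbatim from source to target, so $P$ also occurs at $u_i + (p - s_i)$; this is a new occurrence that is reported, and that occurrence itself may lie in the source of further phrases, so the process is iterated.

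Concretely, I would initialise a work list with the supplied primary occurrences, and repeatedly pop an occurrence $(p,m)$, issue one range-reporting query on the rectangle above, report each derived occurrence $u_i + (p - s_i)$ and push it onto the work list. The standard argument (Farach--Thorup) shows that every secondary occurrence is obtained from some primary occurrence by a chain of such source-to-target copies, so this procedure enumerates precisely the set of secondary occurrences, each exactly once (the phrase responsible for a given secondary occurrence is unique, namely the innermost phrase containing it). Feeding the $z$ points into the data structure of Lemma~\ref{lem:range} (ii) on the universe $[n]\times[n]$ yields $O(z \lg\lg z)$ space and $(1+k)\cdot O(\lg\lg n)$ time per query, which is exactly the space budget claimed.

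For the time bound I would sum the query cost over all occurrences. Each occurrence triggers one query; the sum of the $1$-terms over all queries is $O(\occ)$, and the sum of the $k$-terms is the total number of reported points, which also equals the number of secondary occurrences, hence at most $\occ$. Both contributions give $O(\occ \lg\lg n)$, matching the target complexity.

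The main subtlety I expect is justifying that every secondary occurrence is discovered, and discovered only once, by this chain-of-copies process — that is, that the parent phrase relation used to trace back to a primary occurrence is well-defined and that the iteration does not revisit the same occurrence via two different phrases. This is the content of the Farach--Thorup observation and is where I would have to be careful to define ``primary'' so that the innermost enclosing phrase of each secondary occurrence is unique and always has strictly larger index, so the recursion terminates and enumerates each secondary occurrence exactly once. Once that combinatorial claim is in place, the data-structural part is a direct invocation of Lemma~\ref{lem:range} (ii).
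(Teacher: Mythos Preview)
Your proposal is correct and matches the paper's approach exactly: the paper does not spell out a proof of this lemma but simply invokes the K{\"a}rkk{\"a}inen--Ukkonen reduction of secondary-occurrence reporting to 2D orthogonal range reporting on $z$ points (one per phrase), and then plugs in Lemma~\ref{lem:range}~(ii) to obtain the stated $O(z\lg\lg z)$ space and $O(\occ\lg\lg n)$ time bounds. Your write-up supplies the details the paper omits, including the rectangle query, the iterated work-list procedure, and the Farach--Thorup uniqueness argument, all of which are precisely the intended ingredients.
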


We now describe a data structure that can report all primary occurrences of a pattern $P$ of length at most $k$ in $O(m + \occ)$ time using $O(zk)$ space. 

\begin{lemma}\label{lem:shortstrings}
	Given a string $S$ of length $n$ and a positive integer $k \leq n$
	we can build a compressed index supporting pattern matching queries for patterns of length $m$
	in $O(m + \occ\lg\lg n)$ time using $O(zk + z\lg\lg z)$ space that works for $m \leq k$.
\end{lemma}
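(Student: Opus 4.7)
I would split the task into primary and secondary occurrences in the standard LZ77 way. Secondary occurrences I handle by invoking Lemma~\ref{lem:lz77secondary} as a black box, which contributes the $O(z \lg \lg z)$ term to the space and the $O(\occ \lg \lg n)$ term to the query time. What remains is to construct an index of size $O(zk)$ that reports all primary occurrences of any pattern $P$ of length $m \leq k$ in $O(m + \occ)$ time.

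\textbf{Primary index.} Since $m \leq k$, every primary occurrence at a position $p$ fits inside a length-$(2k-1)$ context $C_b = S[b-k+1, b+k-1]$ around some covering border $b$; concretely, if $b \in [p, p+m-1]$ then $[p, p+m-1] \subseteq [b-k+1, b+k-1]$. I would store all $z$ such contexts explicitly, costing $O(zk)$ space and supporting $O(1)$-time access to any character of any context. To avoid multiplicity, every candidate starting position $p$ (a position within $[b-k+1, b]$ for some border) is assigned canonically to its nearest border on the right, $b^*(p) = \min\{b : b \text{ is a border}, b \geq p\}$, together with its offset $j^*(p) = b^*(p) - p \in [0, k-1]$. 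This yields at most $zk$ canonical pairs $(p, j^*(p))$. I would then build a compacted trie on the length-$k$ substrings $\{S[p, p+k-1] : p \text{ canonical}\}$, storing edge labels as pointers into the already-stored contexts; the trie has $O(zk)$ nodes.

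\textbf{Query.} On a query $P$ I descend the trie with $P$ in $O(m)$ time using the $O(1)$-time character access, stopping at a node whose subtree contains exactly the canonical positions $p$ with $S[p, p+m-1] = P$. Among these, the primary occurrences of $P$ are precisely the $p$ with $j^*(p) \leq m - 1$ (equivalently, $b^*(p) \in [p, p+m-1]$), and thanks to the canonical assignment each primary occurrence is represented by a unique leaf, so no deduplication is needed. Feeding this primary list into Lemma~\ref{lem:lz77secondary} then yields the secondary ones, producing the claimed $O(m + \occ \lg \lg n)$ time and $O(zk + z \lg \lg z)$ space.

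\textbf{Main obstacle.} The delicate step is enumerating the qualifying leaves at $O(1)$ time per output while keeping the total space at $O(zk)$: a direct reduction to two-dimensional range reporting via Lemma~\ref{lem:range} would cost either an extra $\lg\lg$ factor in space (option (ii)) or a $\lg^{\epsilon} n$ factor per reported occurrence (option (i)), violating the stated bounds. The way around this is to exploit that the filter is one-dimensional and its threshold $m-1$ is bounded by $k-1$: for each subtree of the trie I would thread the leaves in non-decreasing order of $j^*$ using standard DFS-interval techniques (for example via a priority-search-tree-style layout or per-subtree leaf chains), so that enumerating leaves with $j^* \leq m - 1$ reduces to walking a prefix of such a chain, whose cost is charged to the output.
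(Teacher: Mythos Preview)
Your approach is essentially the paper's: store length-$\Theta(k)$ contexts around the $z$ borders, build a suffix-tree-like structure over them, locate $P$ in it, keep only the hits that actually cross a border, and hand those primaries to Lemma~\ref{lem:lz77secondary}. The paper uses a generalized suffix tree on the $z$ contexts (deduplicating at construction time by keeping, for each starting position in $S$, only the longest suffix), which amounts to the same thing as your compacted trie over canonical length-$k$ windows.

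Your ``main obstacle'', however, is not an obstacle, and the extra machinery you propose (priority-search-tree layout, per-subtree leaf chains) is unnecessary. Every leaf in the subtree you reach --- whether or not it satisfies $j^*(p)\le m-1$ --- is a distinct position $p$ in $S$ with $S[p,p+m-1]=P$, i.e.\ a genuine occurrence of $P$. Those with $j^*(p)\le m-1$ are primary; the others are secondary. Hence the number of leaves in that subtree is at most $\occ$, and you may simply enumerate all of them and test $j^*(p)\le m-1$ in $O(1)$ each, for $O(\occ)$ total. This is precisely the paper's observation (``This takes $O(m+\occ)$ time (where $\occ$ includes secondary occurrences)''). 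So your proof is correct once you drop the one-dimensional filtering gadget and replace it with this direct counting argument.
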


\begin{proof}
Consider the set $C$ of $z$ substrings of $S$ that are defined by $S[u_i-k, u_i+k - 1]$ for $i \in [z]$, ie. the substrings of length $2k$ surrounding the borders of the LZ77-parse. The total length of these strings is $\Theta(zk)$. 
Construct the generalized suffix tree $T$ over the set of strings $C$. This takes $\Theta(zk)$ words of space. To ensure no occurrence is reported more than once, if multiple suffixes in this generalized suffix tree correspond to substrings of $S$ that starts on the same position in $S$, only include the longest of these. This happens when the distance between two borders is less than $2k$.

To find the primary occurrences of $P$ of length $m$, simply find all occurrences of $P$ in $T$. These occurrences are a super set of the primary occurrences of $P$ in $S$, since $T$ contains all substrings starting/ending at most $k$ positions from a border. It is easy to filter out all occurrences that are not primary, simply by calculating if they cross a border or not. This takes $O(m + \occ)$ time (where $\occ$ includes secondary occurrences). Combined with Lemma~\ref{lem:lz77secondary} this gives Lemma~\ref{lem:shortstrings}.
\end{proof}

\section{Semi-Short Patterns}

In this section, we show how to handle patterns of length between $\lg \lg z$ and $\lg^\epsilon z$. It is based on the same reduction to 2D-range reporting as used for long patterns. However, the positions in $S$ that are inserted in the range reporting structure is now based on the LZ77-parse of $S$ instead. Furthermore we use Lemma~\ref{lem:range} (ii) which gives faster range reporting but uses super-linear space, which is fine because we instead put fewer points into the structure. We get the following lemma:

\begin{lemma}\label{lem:semishort}
    Given a string $S$ of length $n$ we solve the compressed indexing problem
    for a pattern $P$ of length $m$ with $\lg\lg z \leq m \leq \lg^\epsilon z$ for any positive constant $\epsilon < \frac{1}{2}$ in $O(m + \occ(\lg\lg n + \lg^\epsilon z))$ time using $O(z(\lg \lg z + \log(n/z)))$ space.
\end{lemma}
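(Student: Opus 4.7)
The plan is to adapt the long-pattern machinery of Lemma~\ref{lem:longstrings} so that (i) only $O(z)$ points, one per LZ77 border, are inserted into the 2D-range-reporting structure, and (ii) we can therefore afford the faster super-linear-space variant of Lemma~\ref{lem:range}(ii) without blowing the space budget. Secondary occurrences will be handled by Lemma~\ref{lem:lz77secondary}. This matches the announced space $O(z(\lg(n/z) + \lg\lg z))$ and pushes the per-occurrence cost down to $O(\lg\lg n + \lg\lg z)$.

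First I would build the signature DAG $dag(S)$ in $O(z\lg(n/z))$ space, giving constant-time access to fingerprints and bookmarked prefix/suffix extractions at any position of $S$. For each of the $z$ LZ77 borders $b$, I would insert the reverse of $S[b - \lg^\epsilon z,\, b-1]$ into a z-fast trie $T_1$ and $S[b,\, b + \lg^\epsilon z - 1]$ into a z-fast trie $T_2$, both truncated at the text boundaries. By Lemma~\ref{lem:fat1} these $O(z)$ strings of length at most $\lg^\epsilon z$ over an alphabet of size $n^{O(1)}$ take $O(z\lg\lg z)$ bits. I would then insert the rank pair $(a_b, c_b) \in T_1 \times T_2$ of each border $b$ as one point in a 2D range reporting structure built via Lemma~\ref{lem:range}(ii), costing $O(z\lg\lg z)$ space and $O(\lg\lg z)$ per reported point. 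Adding the secondary-occurrence structure of Lemma~\ref{lem:lz77secondary} keeps the total space at $O(z(\lg(n/z) + \lg\lg z))$.

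The query follows the template of Lemma~\ref{lem:longstrings}. Given $P$ of length $m$, I compute all prefix fingerprints of $P$, construct $sig(P)$ in $O(m)$ time, and extract the set $P_S$ of $O(\lg m)$ candidate split points induced by the inconsistent nodes at each level of $sig(P)$. For each $i \in P_S$ I weak-prefix-search $rev(P[1..i])$ in $T_1$ and $P[i+1..m]$ in $T_2$, verify fingerprint collisions with the bookmarked-fingerprinting trick exactly as in Lemma~\ref{lem:longstrings}, and then issue a 2D range query. Each reported point identifies a border $b$ that lies inside a primary occurrence of $P$, and the resulting list is fed into the structure of Lemma~\ref{lem:lz77secondary} to report the secondary occurrences in $O(\occ \lg\lg n)$ time. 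Correctness of restricting to $P_S$ is the same argument as in Lemma~\ref{lem:longstrings}, applied now to occurrences that cross a border rather than to arbitrary occurrences.

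The main obstacle is the running-time accounting. Constructing $sig(P)$, extracting fingerprints, and performing the $O(\lg m)$ weak-prefix searches together cost $O(m + \lg^2 m) = O(m)$ since $\lg m = O(\lg\lg z) = O(m)$ in the semi-short range. The $O(\lg m)$ range queries contribute $O(\lg m \cdot \lg\lg z) = O((\lg\lg z)^2)$ pre-output work, and reporting adds $O(\occ(\lg\lg z + \lg\lg n)) = O(\occ \lg\lg n)$. The constraint $\epsilon < 1/2$ ensures $(\lg\lg z)^2 = O(\lg^{\epsilon} z)$, so the pre-output overhead is absorbed into the $\occ \cdot \lg^\epsilon z$ term whenever $\occ \geq 1$ and is otherwise dominated by the $O(m)$ term at the upper part of the semi-short range, where the lowest slice $m < (\lg\lg z)^2$ falls back to Lemma~\ref{lem:shortstrings}. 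Combining these bounds yields the claimed $O(m + \occ(\lg\lg n + \lg^\epsilon z))$ query time within $O(z(\lg(n/z) + \lg\lg z))$ space.
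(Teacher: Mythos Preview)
Your proposal contains a genuine correctness gap in the step where you restrict attention to the split-point set $P_S$ derived from $sig(P)$. The argument in Lemma~\ref{lem:longstrings} that only positions in $P_S$ need be tried hinges on the local consistency of the signature parsing: if $P$ occurs as $S[l,r]$, the node of $sig(S)$ that stabs this occurrence must do so at a position in $P_S$, because the interior of $T(l,r)$ is forced to agree with $sig(P)$ by Lemma~\ref{lem:size}. LZ77 borders enjoy no such property. A primary occurrence of $P$ can have its leftmost border at \emph{any} offset $1,\ldots,m-1$, completely independent of how $sig(P)$ decomposes $P$. Hence your range queries will miss every primary occurrence whose border offset happens not to lie in $P_S$, and the index is unsound.

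The paper avoids this by a different mechanism: it precomputes, in a dictionary $H$, the actual border offset for every string in the set $B$ of length-$\leq \lg^\epsilon z$ substrings crossing a border, so that a single lookup of $P$ returns exactly the split points that matter. The choice $\epsilon<\tfrac12$ is what keeps $H$ to $O(z)$ words ($O(z\lg^{2\epsilon}z)$ entries of $O(\lg\lg z)$ bits each). This also fixes the time accounting you struggled with: each returned split point is witnessed by at least one primary occurrence, so the trie and range-reporting work is charged to $\occ$; and when $P\notin B$, either $H$ says ``no'' in $O(1)$ time or a single false positive costs $O(\lg\lg z)=O(m)$, after one grammar-based verification of the first candidate. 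Your fallback to Lemma~\ref{lem:shortstrings} for $m<(\lg\lg z)^2$ would, incidentally, cost $O(z(\lg\lg z)^2)$ words and so also breaks the stated space bound.
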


\subsection{Data Structure}

As in the previous section for short patterns, we only need to worry about primary occurrences of $P$ in $S$. Let $B$ be the set of all substrings of length at most $\lg^\epsilon z$ that cross a border in $S$. 
The split positions of such a string are the offsets of the leftmost borders in its occurrences.
All primary occurrences of $P$ in $S$ are in this set. The size of this set is $|B| = O(z \lg^{2\epsilon} z)$. The data structure is composed by the following:

\begin{itemize}
    \item A dictionary $H$ mapping each string in $B$ to its split positions.
    \item A z-fast trie $T_1$ on the reverse of the strings $T[u_i, l_i]$ for $i \in [z]$.
    \item A z-fast trie $T_2$ on the strings $T[u_i, n]$ for $i \in [z]$.
    \item A range reporting data structure $R$ with a point $(c,d)$ for every pair
        of strings $C_i = T[u_i,l_i], D_i = T[u_{i+1}, n]$ for $i \in [z]$ where $D_{z} = \epsilon$ and
        $c$ is the lexicographical rank of the reverse of $C_i$ in the set $\{C_1, \ldots, C_{z} \}$
        and $d$ is the lexicographical rank of $D_i$ in the set $\{D_1, \ldots D_{z}\}$. 
        We store the border $u_i$ along with the point $(c, d)$.
    \item The data structure described in Lemma~\ref{lem:lz77secondary} to report secondary occurrences.
    \item The signature grammar for $S$. 
\end{itemize}

Each entry in $H$ requires $\lg \lg^\epsilon z = O(\lg \lg z)$ bits to store since a split position can be at most $\lg^\epsilon z$. Thus the dictionary can be stored in $O(|B| \cdot \lg \lg z) = O(z \lg^{2\epsilon} z \lg \lg z)$ bits which for $\epsilon < \frac{1}{2}$ is $O(z)$ words. The tries $T_1$ and $T_2$ take $O(z)$ space while $R$ takes $O(z\lg\lg z)$ space. The signature grammar takes $O(z \log(n/z))$. Thus the total space is $O(z(\lg \lg z + \log(n/z)))$.

\subsection{Searching}

Assume a lookup for $P$ in $H$ does not give false-positives. Given a pattern $P$ compute all prefix fingerprints of $P$. Next do a lookup in $H$. If there is no match then $P$ does
not occur in $S$. Otherwise, we do the following for each of the split-points $s$ stored in $H$. First split $P$ into a left part $P_l = P[0,s-1]$ and a right part $P_r = P[s, m]$. Then search for the reverse of $P_l$ in
$T_1$ and for $P_r$ in $T_2$ using the corresponding fingerprints. The search induces a (possibly empty) range for which 
we do a range reporting query in $R$. 
Each occurrence in $R$ corresponds to a primary occurrence of $P$ in $S$, so report these. Finally use Lemma~\ref{lem:lz77secondary} to report all secondary occurrences.

Unfortunately, we cannot guarantee a lookup for $P$ in $H$ does not give a
false positive. Instead, we pause the reporting step when the first possible
occurrence of $P$ has been found. At this point, we verify the substring $P$
matches the found occurrence in $S$. We know this occurrence is around an
LZ-border in $S$ such that $P_l$ is to the left of the border and $P_r$ is to
the right of the border. Thus we can efficiently verify that $P$ actually
occurs at this position using the grammar.

\subsection{Analysis}

Computing the prefix fingerprints of $P$ takes $O(m)$ time. First, we analyze the running time in the case $P$ actually exists in $S$. The lookup in $H$ takes $O(1)$ time using perfect hashing. For each split-point we do two z-fast trie lookups in time $O(\lg m) = O(\lg \lg z)$. Since each different split-point corresponds to at least one unique occurrence, this takes at most $O(\occ \lg \lg z)$ time in total. Similarly each lookup and occurrence in the 2D-range reporting structure takes $\lg \lg z$ time, which is therefore also bounded by $O(\occ \lg \lg z)$ time. Finally, we verified one of the found occurrence against $P$ in $O(m)$ time. So the total time is $O(m + \occ \lg \lg z)$ in this case.

In the case $P$ does not exists, either the lookup in $H$ tells us that, and we spend $O(1)$ time, or the lookup in $H$ is a false-positive. In the latter case, we perform exactly two z-fast trie lookups and one range reporting query. These all take time $O(\lg \lg z)$. Since $m \geq \lg\lg z$ this is $O(m)$ time. Again, we verified the found occurrence against $P$ in $O(m)$ time. The total time in this case is therefore $O(m)$.

Note we ensure our fingerprint function is collision free for all substrings in $B$ during the preprocessing thus there can only be collisions if $P$ does not occur in $S$ when $m \leq \lg^\epsilon z$.

\section{Randomized Solution}

In this section we present a very simple way to turn the $O(m + (1 + \occ)\lg^\epsilon z)$ worst-case time of Lemma~\ref{lem:longstrings} into $O(m + \occ\lg^\epsilon z)$ expected time. First observe, this is already true if the pattern we search for occurs at least once or if $m \geq \lg^\epsilon z$.

As in the semi-short patterns section, we consider the set $B$ of substrings of $S$ of length at most $\lg^\epsilon z$ that crosses a border. Create a dictionary $H$ with $z\lg^{3\epsilon} z$ entries and insert all the strings from $B$. This means only a $\lg^\epsilon z$ fraction of the entries are used, and thus if we lookup a string $s$ (where $|s| \leq \lg^\epsilon z$) that is not in $H$ there is only a $\frac{1}{\lg^\epsilon z}$ chance of getting a false-positive. 

Now to answer a query, we first check if $m \leq \lg^\epsilon z$ in which case we look it up in $H$. If it does not exist, report that. If it does exist in $H$ or if $m > \lg^\epsilon z$ use the solution from Lemma~\ref{lem:longstrings} to answer the query.

In the case $P$ does not exist, we spend either $O(m)$ time if $H$ reports no, or $O(m + \lg^\epsilon z)$ time if $H$ reports a false-positive. Since there is only $\frac{1}{\lg^\epsilon z}$ chance of getting a false positive, the expected time in this case is $O(m)$. In all other cases, the running time is $O(m + \occ\lg^\epsilon z)$ in worst-case, so the total expected running time is $O(m + \occ\lg^\epsilon z)$.
The space usage of $H$ is $O(z\lg^{3\epsilon} z)$ bits since we only need to store one bit for each entry. This is $O(z)$ words for $\epsilon \leq 1/3$.
To sum up, we get the following lemma:

\begin{lemma}\label{lem:expectedsolution}
	Given a signature grammar for a text $S$ of length $n$ with an LZ77-parse of length $z$
	we can build a compressed index supporting pattern matching queries in
	$O(m + \occ\lg^\epsilon z)$ expected time using $O(z\lg(n/z))$ space for any constant $0 < \epsilon \leq 1/3$.
\end{lemma}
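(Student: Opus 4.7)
The plan is to combine the long-pattern index of Lemma~\ref{lem:longstrings} with a small Bloom-filter-style hash table that resolves the one problematic regime. The worst-case bound of Lemma~\ref{lem:longstrings} is $O(m + (1+\occ)\lg^\epsilon z)$, which already matches the target whenever $\occ \geq 1$ or $m \geq \lg^\epsilon z$: in those cases either $\occ\lg^\epsilon z$ or $m$ absorbs the spurious additive term. Thus the only regime left to handle is short patterns $m < \lg^\epsilon z$ that do \emph{not} occur in $S$, where a direct call to Lemma~\ref{lem:longstrings} would spend $\Theta(\lg^\epsilon z)$ more than allowed. I plan to eliminate this overhead in expectation via a cheap negative-answer shortcut.

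Concretely, let $B$ be the set of substrings of $S$ of length at most $\lg^\epsilon z$ that cross an LZ77 border, for which $|B| = O(z\lg^{2\epsilon} z)$. I would allocate a bit-array $H$ of size $N = z\lg^{3\epsilon} z$ and, using a random hash function, set the cell indexed by the Karp--Rabin fingerprint of every string in $B$. Since any occurrence of a pattern $P$ with $m < \lg^\epsilon z$ must cross a border, every true short occurrence implies $P \in B$; hence if the cell hashed to by $P$ is unset, then $P$ certainly does not occur in $S$. Furthermore, at most $|B| = O(z\lg^{2\epsilon} z)$ of the $N$ cells are set, so a pattern truly absent from $B$ hashes to a set cell with probability $O(1/\lg^\epsilon z)$.

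The query algorithm is: if $m \geq \lg^\epsilon z$, invoke Lemma~\ref{lem:longstrings} directly; otherwise compute the fingerprint of $P$ in $O(m)$ time, probe $H$, and immediately return ``no occurrences'' if the bit is unset, else fall back to Lemma~\ref{lem:longstrings}. Splitting the analysis by whether $P$ occurs: the case $\occ \geq 1$ is handled by the worst-case bound of Lemma~\ref{lem:longstrings}; the case $\occ = 0$ costs $O(m)$ when the filter answers correctly and $O(m + \lg^\epsilon z)$ on a false positive, which happens with probability $O(1/\lg^\epsilon z)$, giving an expected cost of $O(m)$. For space, $H$ uses $N = z\lg^{3\epsilon} z$ bits, which is $O(z)$ words precisely when $3\epsilon \leq 1$, i.e.\ $\epsilon \leq 1/3$; added to the $O(z\lg(n/z))$ space of the underlying signature-grammar index this preserves the target bound. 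The main subtlety to check is that the randomness of the hash indexing $H$ is independent of the pattern-absent event so that the expectation computation is legitimate, and that the single lookup into $H$ truly costs $O(m)$ given the precomputed fingerprint; both are straightforward given Karp--Rabin fingerprinting.
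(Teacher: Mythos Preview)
Your proposal is correct and essentially identical to the paper's own argument: the paper likewise observes that Lemma~\ref{lem:longstrings} already meets the target unless $m<\lg^\epsilon z$ and $\occ=0$, builds a one-bit-per-entry hash table of size $z\lg^{3\epsilon} z$ over the set $B$ of border-crossing substrings of length at most $\lg^\epsilon z$, and uses the $O(1/\lg^\epsilon z)$ false-positive rate to bound the expected cost of the bad case by $O(m)$. One small wording slip: it is not true that \emph{any} occurrence of a short pattern crosses a border, only that at least one (the leftmost, primary) occurrence does; this is what you actually need for the implication ``$P$ occurs $\Rightarrow P\in B$'', and it follows from the standard LZ77 property.
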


\bibliography{index}

 \end{document}